

\documentclass[runningheads,envcountsame,envcountsect,11pt]{llncs}
\usepackage[a4paper,margin=0.85in]{geometry}

\usepackage{graphicx}

\usepackage{hyperref,xcolor}

\usepackage{amsmath,amssymb}

\spnewtheorem{observation}[theorem]{Observation}{\bfseries}{\itshape}
\spnewtheorem{myclaim}[theorem]{Claim}{\bfseries}{\itshape}

\newcommand{\fig}{./fig}

\usepackage{color}
\definecolor{darkred}{rgb}{0.8,0,0}

\usepackage{tabularx,environ}

\makeatletter
\newcommand{\problemtitle}[1]{\gdef\@problemtitle{#1}}
\newcommand{\probleminput}[1]{\gdef\@probleminput{#1}}
\newcommand{\problemquestion}[1]{\gdef\@problemquestion{#1}}
\NewEnviron{myproblem}{
  \problemtitle{}\probleminput{}\problemquestion{}
  \BODY
  \par\addvspace{.5\baselineskip}
  \noindent \normalsize
  \begin{tabularx}{\textwidth}{@{\hspace{\parindent}} l X c}
    \multicolumn{2}{@{\hspace{\parindent}}l}{\normalsize\@problemtitle} \\
    \normalsize \textbf{Input:} & \normalsize \@probleminput \\
    \normalsize \textbf{Question:} & \normalsize \@problemquestion
  \end{tabularx}
  \par\addvspace{.5\baselineskip}
}
\makeatother

\newcommand{\figref}[1]{\figurename~\ref{#1}}

\newcommand{\bsat}{3-\textsc{Sat}($2, 1$)}

\newcommand{\vi}[1]{\mathsf{vi}(#1)}

\renewcommand{\orcidID}[1]{} 

\newcommand{\col}{\mathsf{col}}

\begin{document}

\title{Subgraph Isomorphism on Graph Classes that Exclude a Substructure\thanks{%
Partially supported by
NETWORKS (the Networks project, funded by the Netherlands Organization for Scientific Research NWO),
the ELC project (the project Exploring the Limits of Computation, funded by MEXT),
JSPS/MEXT KAKENHI grant numbers JP24106004, JP18K11168, JP18K11169, JP18H04091, JP18H06469, JP15K00009,
JST CREST Grant Number JPMJCR1402, and Kayamori Foundation of Informational Science Advancement.
The authors thank 
Momoko Hayamizu, Kenji Kashiwabara, Hirotaka Ono, Ryuhei Uehara, and Koichi Yamazaki for helpful discussions.
The authors are grateful to the anonymous reviewer of an earlier version of this paper
who pointed out a gap in a proof.
}}
%

\author{Hans L. Bodlaender\inst{1}\orcidID{0000-0002-9297-3330} \and
Tesshu Hanaka\inst{2}\orcidID{0000-0001-6943-856X} \and
Yasuaki Kobayashi\inst{3}\orcidID{} \and
Yusuke Kobayashi\inst{3}\orcidID{} \and \\ 
Yoshio Okamoto\inst{4,5}\orcidID{0000-0002-9826-7074} \and
Yota Otachi\inst{6}\orcidID{0000-0002-0087-853X} \and
Tom C. van der Zanden\inst{1}\orcidID{0000-0003-3080-3210}
}
\authorrunning{Bodlaender et al.}
%

\institute{
Utrecht University, Utrecht, The Netherlands\\
\email{\{H.L.Bodlaender,T.C.vanderZanden\}@uu.nl}
\and
Chuo University, Bunkyo-ku, Tokyo, Japan\\
\email{hanaka.91t@g.chuo-u.ac.jp}
\and
Kyoto University, Kyoto, Japan\\
\email{yusuke@kurims.kyoto-u.ac.jp}, \email{kobayashi@iip.ist.i.kyoto-u.ac.jp}
\and
The University of Electro-Communications, Chofu, Tokyo, Japan\\
\email{okamotoy@uec.ac.jp}
\and
RIKEN Center for Advanced Intelligence Project, Tokyo, Japan
\and
Kumamoto University, Kumamoto, 860-8555, Japan\\
\email{otachi@cs.kumamoto-u.ac.jp}
}

\maketitle

\begin{abstract}
We study \textsc{Subgraph Isomorphism} on graph classes defined by a fixed forbidden graph.
Although there are several ways for forbidding a graph,
we observe that it is reasonable to focus on the minor relation
since other well-known relations lead to either trivial or equivalent problems.
When the forbidden minor is connected, we present a near dichotomy of the complexity of \textsc{Subgraph Isomorphism}
with respect to the forbidden minor, where the only unsettled case is $P_{5}$, the path of five vertices.
We then also consider the general case of possibly disconnected forbidden minors.
We show fixed-parameter tractable cases and 
randomized XP-time solvable cases parameterized by the size of the forbidden minor $H$.
We also show that by slightly generalizing the tractable cases, the problem becomes NP-complete.
All unsettle cases are equivalent to $P_{5}$ or the disjoint union of two $P_{5}$'s.
As a byproduct, we show that \textsc{Subgraph Isomorphism}
is fixed-parameter tractable parameterized by vertex integrity.
Using similar techniques, we also observe that 
\textsc{Subgraph Isomorphism} is fixed-parameter tractable parameterized by neighborhood diversity.

\keywords{Subgraph isomorphism \and minor-free graphs \and parameterized complexity}
\end{abstract}


\section{Introduction}
\label{sec:intro}
Let $Q$ and $G$ be graphs.
A \emph{subgraph isomorphism} $\eta$ is an injection from $V(Q)$ to $V(G)$
that preserves the adjacency in $Q$; that is, if $\{u,v\} \in E(Q)$, then $\{\eta(u),\eta(v)\} \in E(G)$.
We say that $Q$ is \emph{subgraph-isomorphic} to $G$ if there is a subgraph isomorphism from $Q$ to $G$,
and write $Q \preceq G$.
In this paper, we study the following problem of deciding the existence of a subgraph isomorphism.
\begin{myproblem}
  \problemtitle{\textsc{Subgraph Isomorphism}}
  \probleminput{Two graphs $G$ (the \emph{host} graph) and $Q$ (the \emph{pattern} graph).}
  \problemquestion{$Q \preceq G$?}
\end{myproblem}

The problem \textsc{Subgraph Isomorphism} is one of the most general and fundamental graph problems
and generalizes many other graph problems such as 
\textsc{Graph Isomorphism}, \textsc{Clique}, \textsc{Hamiltonian Path/Cycle}, and \textsc{Bandwidth}.
Obviously, \textsc{Subgraph Isomorphism} is NP-complete in general.
When both host and pattern graphs are restricted to be in a graph class $\mathcal{C}$,
we call the problem \emph{\textsc{Subgraph Isomorphism} on $\mathcal{C}$}.
By slightly modifying known reductions in \cite{GareyJ79,Damaschke90},
one can easily show that the problem is hard even for very restricted graph classes.
Recall that a \emph{linear forest} is the disjoint union of paths
and a \emph{cluster graph} is the disjoint union of complete graphs.
We can show the following hardness of \textsc{Subgraph Isomorphism} by a simple reduction from \textsc{3-Partition}~\cite{GareyJ79}.
See Proposition~\ref{prop:linfor-cluster}.

Since most of the well-studied graph classes contain all linear forests or all cluster graphs,
it is often hopeless to have a polynomial-time algorithm for an interesting graph class.
This is sometimes true even if we further assume that the graphs are connected~\cite{KijimaOSU12,KonagayaOU16}.
On the other hand, it is polynomial-time solvable for trees~\cite{Matula78}.
This result was first generalized for 2-connected outerplanar graphs~\cite{Lingas89},
and finally for $k$-connected partial $k$-trees~\cite{MatousekT92,GuptaN96} (where the running time is XP parameterized by $k$).
In \cite{MatousekT92}, a polynomial-time algorithm for partial $k$-trees of bounded maximum degree is presented as well,
which is later generalized to partial $k$-trees of log-bounded fragmentation~\cite{HajiaghayiN07}.
When the pattern graph has bounded treewidth,
the celebrated color-coding technique~\cite{AlonYZ95} gives a fixed-parameter algorithm parameterized by the size of the pattern graph.
It is also known that for chain graphs, co-chain graphs, and threshold graphs,
\textsc{Subgraph Isomorphism} is polynomial-time solvable~\cite{KijimaOSU12,KonagayaOU16,KiyomiO16}.
In the case where only the pattern graph has to be in a restricted graph class that is closed under vertex deletions,
a complexity dichotomy with respect to the graph class is known~\cite{JansenM15}.

Because of its unavoidable hardness in the general case,
it is often assumed that the pattern graph is small.
In such a setting, we can study the parameterized complexity\footnote{%
We assume that the readers are familiar with the concept of parameterized complexity.
See e.g.\ \cite{CyganFKLMPPS15} for basic definitions omitted here.
}
of \textsc{Subgraph Isomorphism} parameterized by the size of the pattern graph.
Unfortunately, the W[1]-completeness of \textsc{Clique}~\cite{DowneyF95a}
implies that this parameterization does not help in general.
Indeed, the existence of a $2^{o(n \log n)}$-time algorithm for \textsc{Subgraph Isomorphism}
 is ruled out assuming the Exponential Time Hypothesis,
where $n$ is the total number of vertices~\cite{CyganFGKMPS17}.
So we need further restrictions on the considered graph classes even in the parameterized setting.
For planar graphs, it is known to be fixed-parameter tractable~\cite{Eppstein99,Dorn10}.
This result is later generalized to graphs of bounded genus~\cite{Bonsma12}.
For several graph parameters, the parameterized complexity of \textsc{Subgraph Isomorphism}
parameterized by combinations of them is determined in \cite{MarxP14}.
In \cite{BodlaenderNZ16}, it is shown that when the pattern graph excludes a fixed graph as a minor,
the problem is fixed-parameter tractable parameterized by treewidth and the size of the pattern graph.
The result in \cite{BodlaenderNZ16} implies also that \textsc{Subgraph Isomorphism} 
can be solved in subexponential time when the host graph also excludes a fixed graph as a minor.


\subsection{Our results}

As mentioned above, the research on \textsc{Subgraph Isomorphism} has been done mostly
when the size of the pattern graph is considered as a parameter.
However, in this paper, we are going to study the general case where the pattern graph can be as large as the host graph.

We denote the path of $n$ vertices by $P_{n}$,
the complete graph of $n$ vertices by $K_{n}$,
and the star with $\ell$ leaves by $K_{1,\ell}$.
The disjoint union of graphs $X$ and $Y$ is denoted by $X \cup Y$
and the disjoint union of $k$ copies of a graph $Z$ is denoted by $k Z$.
The complement of a graph $X$ is denoted by $\overline{X}$.

We believe the following fact is folklore but give a proof to be self-contained.
\begin{proposition}
\label{prop:linfor-cluster}
\textsc{Subgraph Isomorphism} is NP-complete on linear forests and on cluster graphs
even if the input graphs have the same number of vertices.
\end{proposition}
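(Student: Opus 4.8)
The plan is to reduce from \textsc{3-Partition}~\cite{GareyJ79}, which is strongly NP-complete: given $3m$ positive integers $a_1,\dots,a_{3m}$ and a bound $B$ with $\sum_i a_i = mB$ and $B/4 < a_i < B/2$ for all $i$, decide whether $\{1,\dots,3m\}$ can be partitioned into $m$ triples each summing to $B$. Because \textsc{3-Partition} remains NP-complete when the integers are encoded in unary, I may build gadgets whose sizes are polynomial in $\sum_i a_i$. For the linear-forest case I would take the pattern $Q = \bigcup_{i=1}^{3m} P_{a_i}$ and the host $G = m\,P_{B}$ ($m$ disjoint copies of the $B$-vertex path), and for the cluster-graph case the identical construction with cliques in place of paths, i.e.\ $Q = \bigcup_{i} K_{a_i}$ and $G = m\,K_{B}$. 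In both cases $Q$ and $G$ have exactly $\sum_i a_i = mB$ vertices, so the ``same number of vertices'' requirement is met for free, both graphs clearly lie in the respective class, and membership of \textsc{Subgraph Isomorphism} in NP is immediate (guess the injection and verify it in polynomial time).

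The heart of the argument is a structural lemma characterizing when one such graph embeds into the other. First I would observe that a connected pattern component---a path $P_{a_i}$ (resp.\ a clique $K_{a_i}$)---is mapped by any subgraph isomorphism into a single connected component of the host, since the image of a connected graph is connected; moreover $P_{a_i}$ embeds into a host path $P_{b}$ iff $a_i \le b$ (a connected subgraph of a path is a subpath), and $K_{a_i}$ embeds into $K_{b}$ iff $a_i \le b$. Crucially, distinct pattern components are sent to vertex-disjoint images but impose no further constraint on one another, because the edges of $Q$ run only inside its components and non-edges need not be preserved. Hence several pattern components can share a single host component as long as the sum of their vertex counts does not exceed that component's size. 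This yields the equivalence: $Q \preceq G$ iff the items $a_1,\dots,a_{3m}$ can be assigned to the $m$ bins of capacity $B$ so that the total size in each bin is at most $B$. For the forward direction I would lay the assigned pattern paths consecutively along each host path (and for cliques simply pick disjoint vertex subsets), which is a valid embedding since contiguous subpaths are vertex-disjoint; the reverse direction reads the assignment off the images.

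Finally I would close the loop with the numerology of \textsc{3-Partition}. Since $\sum_i a_i = mB$ and there are $m$ bins of capacity $B$, any feasible assignment fills every bin to exactly $B$; and because $B/4 < a_i < B/2$, a bin summing to $B$ can contain neither two items (sum below $B$) nor four (sum above $B$), hence exactly three. Thus a feasible packing is precisely a partition into triples of sum $B$, so $Q \preceq G$ if and only if the \textsc{3-Partition} instance is a yes-instance. The main obstacle I anticipate is not the counting but making the structural lemma fully rigorous: one must argue carefully that an embedding of a linear forest into a linear forest (and of a cluster graph into a cluster graph) decomposes componentwise into independent packings, and in particular that packing several subpaths into one host path never requires separating gaps, so that the only binding constraint is the per-component sum of vertex counts.
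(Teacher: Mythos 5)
Your proposal is correct and uses exactly the same reduction as the paper: from \textsc{3-Partition} with host $m P_B$ (resp.\ $m K_B$) and pattern $\bigcup_i P_{a_i}$ (resp.\ $\bigcup_i K_{a_i}$). The paper dismisses the equivalence as straightforward, while you spell out the componentwise packing argument and the numerology forcing exactly three items per bin; these details are sound and simply make explicit what the paper leaves implicit.
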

\begin{proof}
Since the problem is clearly in NP, we show the NP-hardness.
Recall that \textsc{3-Partition} is the following problem:
the input is $3m$ positive integers $a_{1}, \dots, a_{3m}$ with $\sum_{1 \le i \le 3m} a_{i} = m B$ such that $B/4 < a_{i} < B/2$ for all $i$.
The task is to decide whether there is a partition of $a_{1}, \dots, a_{3m}$ into $m$ triplets such that the members of each triplet sum up to $B$.
The problem is known to be strongly NP-complete;
that is, it is NP-complete even if a polynomial in $m$ upper bounds all $a_{i}$~\cite{GareyJ79}.

To show the hardness on linear forests,
we set the host graph $G$ to be $m P_{B}$
and the pattern graph $Q$ to be $P_{a_{1}} \cup \dots \cup P_{a_{3m}}$.
Similarly, to show the hardness on cluster graphs,
we set $G = m K_{B}$ and $Q = K_{a_{1}} \cup \dots \cup K_{a_{3m}}$.
It is straightforward to show that $Q \preceq G$ if and only if
the corresponding instance of \textsc{3-Partition} is a yes-instance.
\qed
\end{proof}

Our first observation is that forbidding a graph as an induced substructure
(an induced subgraph, an induced topological minor, or an induced minor)
does not help for making \textsc{Subgraph Isomorphism} tractable unless we make the graph class trivial.
This can be done just by combining some easy observations and known results.

\begin{observation}
\label{obs:induced}
Let $\mathcal{C}$ be the graph class that forbids a fixed graph $H$ as
either an induced subgraph, an induced topological minor, or an induced minor.
Then, \textsc{Subgraph Isomorphism} on $\mathcal{C}$ is polynomial-time solvable
if $H$ has at most two vertices; otherwise, it is NP-complete.
\end{observation}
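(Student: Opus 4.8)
The plan is to treat the two regimes separately, in both cases reducing from \textsc{3-Partition} through Proposition~\ref{prop:linfor-cluster}, and to lean on the following complementation duality. If $|V(Q)| = |V(G)|$, then any injection $\eta \colon V(Q) \to V(G)$ is a bijection, and a direct check of the defining condition shows that $\eta$ is a subgraph isomorphism from $Q$ to $G$ if and only if $\eta^{-1}$ is a subgraph isomorphism from $\overline{G}$ to $\overline{Q}$; hence $Q \preceq G$ iff $\overline{G} \preceq \overline{Q}$. Since the hard instances of Proposition~\ref{prop:linfor-cluster} satisfy $|V(Q)| = |V(G)|$, this lets me transfer hardness from any family to the family of its complements. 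Membership in NP is clear (guess the injection), so throughout I only argue hardness or polynomial-time solvability.

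For the polynomial cases I would first note that for $|V(H)| \le 2$ the three relations define the same, essentially trivial, class. If $H$ has at most one vertex the class is empty or a single graph. The two graphs on two vertices are $K_2$ and $\overline{K_2}$; forbidding $K_2$ leaves exactly the edgeless graphs and forbidding $\overline{K_2}$ leaves exactly the complete graphs, and for these $H$ the induced-subgraph, induced-topological-minor, and induced-minor relations coincide, since subdividing or contracting can neither create nor destroy the single edge. On both edgeless graphs and complete graphs $Q \preceq G$ reduces to testing $|V(Q)| \le |V(G)|$, which is solvable in polynomial (indeed constant) time.

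For the NP-hardness I fix $H$ with $|V(H)| \ge 3$ and let $\mathcal{C}$ be the class for any one of the three relations. The preliminary observation is that, for each relation, all linear forests lie in $\mathcal{C}$ precisely when $H$ is not a linear forest, and all cluster graphs lie in $\mathcal{C}$ precisely when $H$ is not a cluster graph; indeed an induced subgraph, induced topological minor, or induced minor of a linear forest (respectively a cluster graph) is again a linear forest (respectively a cluster graph). Thus whenever $H$ fails to be a linear forest or fails to be a cluster graph, Proposition~\ref{prop:linfor-cluster} applies to $\mathcal{C}$ directly. The remaining and principal case is when $H$ is simultaneously a linear forest and a cluster graph, i.e.\ $H$ has maximum degree at most $1$, so $H = s K_1 \cup t K_2$ with $s + 2t \ge 3$.

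Here I would pass to complements. If $t \ge 1$, then $H$ has at least two components, one of which is an edge, and I claim that no complement of a cluster graph contains $H$ even as an induced minor: any connected induced subgraph on at least two vertices must meet at least two of the independent parts, hence is adjacent to every vertex outside those two parts, leaving no room for a second non-adjacent component. Consequently all complements of cluster graphs lie in $\mathcal{C}$, and applying the duality to the cluster-graph instances $G = m K_B$, $Q = K_{a_1} \cup \dots \cup K_{a_{3m}}$ of Proposition~\ref{prop:linfor-cluster} yields hardness. If $t = 0$, then $H = s K_1$ with $s \ge 3$, and for all three relations $\mathcal{C}$ is exactly the class of graphs of independence number at most $s-1$, since neither deletions nor contractions increase the independence number. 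The complements $\overline{m P_B}$ and $\overline{P_{a_1} \cup \dots \cup P_{a_{3m}}}$ have independence number at most $2 \le s-1$, so they lie in $\mathcal{C}$, and the duality applied to the linear-forest instances of Proposition~\ref{prop:linfor-cluster} again gives hardness. The crux of the whole argument is this last case: choosing the correct complemented family and verifying that it avoids $H$ under the strongest relation (induced minor), which then automatically covers induced topological minors and induced subgraphs.
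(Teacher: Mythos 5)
Your proof is correct, and while it shares the paper's skeleton (reduce everything to Proposition~\ref{prop:linfor-cluster} plus the complementation duality $Q \preceq G \iff \overline{G} \preceq \overline{Q}$ for equal-size instances), it takes a genuinely different route in the hardest case. The paper first argues that for a linear forest $H$ the three induced relations are equivalent, and then treats the three minimal three-vertex linear forests: $P_3$ (cluster graphs), $K_2 \cup K_1$ (co-cluster graphs via the duality), and $3K_1$, where it invokes an external result of Poljak that \textsc{Independent Set} is NP-complete on triangle-free graphs, so that \textsc{Clique} (whose patterns are complete, hence $3K_1$-free) is NP-complete on the class. You instead split on whether $H$ fails to be a linear forest or fails to be a cluster graph---in either case Proposition~\ref{prop:linfor-cluster} applies verbatim, since both families are closed under induced minors---leaving exactly $H = sK_1 \cup tK_2$. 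For $t \ge 1$ you argue as the paper does for $K_2 \cup K_1$, but for $t = 0$ you replace Poljak's theorem by a second application of the duality: forbidding $sK_1$ ($s \ge 3$) under any of the three relations is equivalent to independence number at most $s-1$, and complements of linear forests have independence number at most $2$, so the complemented \textsc{3-Partition} instances land inside the class. This makes your proof fully self-contained (only \textsc{3-Partition} and the duality are needed), and it handles all three containment relations uniformly by verifying closure of the relevant host classes under induced minors, rather than relying on the equivalence of the relations for linear forests; the paper's route is shorter given the citation and yields hardness on the more classical class of $3K_1$-free graphs. One phrasing nit in your co-cluster claim: the connected set meeting parts $A$ and $B$ dominates every vertex outside \emph{itself} (a leftover vertex of $A$ is adjacent to the set's vertices in $B$, and vice versa), not merely the vertices outside $A \cup B$; this stronger statement is what actually excludes a second component, and it is also what you need when the set in question is the union of the two branch sets of a $K_2$-component in an induced-minor model.
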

\begin{proof}
We assume that $H$ is a linear forest since otherwise it is NP-complete by Lemma~\ref{lem:linear-forest}.
Hence, we may assume that $H$ is forbidden as an induced subgraph as it is equivalent to the other cases.

If $H = 3K_{1}$, then \textsc{Clique} is NP-complete on $\mathcal{C}$
because \textsc{Independent Set} is NP-complete on triangle-free graphs~\cite{Poljak74}.
If $H = P_{3}$, then $\mathcal{C}$ is the class of cluster graphs and thus 
\textsc{Subgraph Isomorphism} on $\mathcal{C}$ is NP-complete by Proposition~\ref{prop:linfor-cluster}.
If $H = K_{2} \cup K_{1} = \overline{P_{3}}$, 
then $\mathcal{C}$ is the class of co-cluster graphs (or complete multi-partite graphs).
It is known that if $|V(G)| = |V(Q)|$, then $Q \preceq G$ if and only if $\overline{G} \preceq \overline{Q}$~\cite{KijimaOSU12}.
Thus, by Proposition~\ref{prop:linfor-cluster}, this case is NP-complete.

In the remaining cases, we can assume that $H$ has order at most 2
since all other cases are NP-complete by the discussion above.
For these cases, the allowed graphs are either edgeless or complete,
and thus \textsc{Subgraph Isomorphism} is trivially polynomial-time solvable.
\qed
\end{proof}

Our main contribution in this paper is the following pair of results on 
\textsc{Subgraph Isomorphism} on graph classes forbidding a fixed graph as a substructure.

\begin{theorem}
\label{thm:connected}
Let $\mathcal{C}$ be the graph class that forbids a fixed connected graph $H \ne P_{5}$ as
either a subgraph, a topological minor, or a minor.
Then, \textsc{Subgraph Isomorphism} on $\mathcal{C}$ is polynomial-time solvable
if $H$ is a subgraph of $P_{4}$; otherwise, it is NP-complete.
\end{theorem}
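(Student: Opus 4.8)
The plan is to handle the three forbidden relations uniformly by exploiting their nesting. Since containing $H$ as a subgraph implies containing it as a topological minor, which in turn implies containing it as a minor, the corresponding classes satisfy $\{H\text{-minor-free}\} \subseteq \{H\text{-topological-minor-free}\} \subseteq \{H\text{-subgraph-free}\}$. Hence for the polynomial-time direction it suffices to give an algorithm that is correct on the \emph{largest} class, the $H$-subgraph-free graphs, and for the hardness direction it suffices to produce hard instances lying in the \emph{smallest} class, the $H$-minor-free graphs; in both cases the conclusion then transfers to the other two relations. I also record that when $H = P_k$ is a path all three classes coincide with ``graphs having no path on $k$ vertices'', because a graph contains $P_k$ as a minor if and only if it has a path on $k$ vertices (one follows the branch sets of a path-minor to extract an actual path on at least $k$ vertices).

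For the polynomial cases, the connected subgraphs of $P_4$ are exactly $P_1,P_2,P_3,P_4$, and I would first characterise the corresponding $H$-subgraph-free classes: for $P_1$ and $P_2$ these are the empty and the edgeless graphs, for $P_3$ the graphs of maximum degree at most $1$ (disjoint unions of $K_1$ and $K_2$), and for $P_4$ the disjoint unions of stars $K_{1,\ell}$ and triangles $K_3$ (a connected $P_4$-subgraph-free graph is a star or a triangle). The first three cases are settled by counting vertices and edges. For $P_4$, the key observation is that each connected component of $Q$ must map into a single connected component of $G$; since a triangle embeds only into a triangle, while a star embeds into any sufficiently large star or into a triangle, $Q \preceq G$ reduces to a bipartite matching between the star/triangle components of $Q$ and the ``slots'' offered by those of $G$, followed by a vertex count to place the isolated-vertex components. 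This is clearly polynomial.

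For the hardness side I split on whether $H$ is a path. If the connected graph $H$ is \emph{not} a path, then it contains a cycle or a vertex of degree at least $3$. Every minor of a linear forest is again a linear forest (deletions and contractions preserve maximum degree at most $2$ and acyclicity), so such an $H$ is a minor of no linear forest; consequently every linear forest is $H$-minor-free, hence also $H$-topological-minor-free and $H$-subgraph-free. Thus all three classes contain all linear forests, and \textsc{Subgraph Isomorphism} is NP-complete by Proposition~\ref{prop:linfor-cluster}. This disposes of all connected non-path forbidden graphs at once (for example $K_3$, every $K_{1,\ell}$ with $\ell \ge 3$, and every cycle).

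The remaining, and genuinely delicate, case is $H = P_k$ with $k \ge 6$. Here all three relations yield the class of graphs with no path on $k$ vertices, and these classes are nested ($P_6$-free $\subseteq P_7$-free $\subseteq \cdots$), so it suffices to prove NP-completeness on $P_6$-free graphs; hardness then propagates to every $k \ge 6$. My plan is to reduce from an NP-complete packing problem and to force a spanning (bijective) embedding by arranging $|V(Q)| = |V(G)|$, so that degree and vertex-count constraints pin down the packing. The main obstacle---indeed the technical heart of the whole theorem---is that $P_6$-freeness caps the longest path at five vertices, which rules out the naive gadgets: any gadget in which three ``items'' share one common capacity pool contains a copy of $K_{3,B}$ and hence a $P_7$, so a direct \textsc{3-Partition}-style reduction with one pool per bin is impossible. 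The construction must instead encode the global packing constraints using only components of bounded longest path (built from stars, triangles, and $K_{2,t}$-type pieces), and checking that such gadgets are simultaneously $P_6$-free and faithful to the reduction is where the real work lies.
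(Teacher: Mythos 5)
Your treatment of the easy parts is correct and matches the paper's: the nesting of the three containment relations (and their coincidence when $H$ is a path), the characterization of connected $P_4$-free graphs as stars and triangles leading to a polynomial-time matching algorithm (the paper's Lemma~\ref{lem:p4}), and NP-completeness for every connected non-path $H$ via the observation that all linear forests are $H$-minor-free combined with Proposition~\ref{prop:linfor-cluster} (the paper's Lemma~\ref{lem:linear-forest}). However, there is a genuine gap: you never actually prove NP-completeness for $H = P_k$ with $k \ge 6$. Your final paragraph only announces a plan (reduce from a packing problem, force a spanning embedding by vertex counts) and then explicitly defers the construction, conceding that building gadgets that are simultaneously $P_6$-subgraph-free and faithful to the reduction ``is where the real work lies.'' Since you yourself identify this case as the technical heart of the theorem, leaving it unconstructed means the theorem is unproved for all $P_k$, $k \ge 6$, and hence the hardness half of the statement is missing.

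Moreover, your heuristic argument for why this case is hard---that any gadget in which three items share one capacity pool contains $K_{3,B}$ and hence a long path---points in a misleading direction: the paper's proof (Theorem~\ref{thm:p6}) does exactly that kind of reduction, from \textsc{Exact 3-Cover}, by making the shared pool a \emph{single vertex} rather than a set of $B$ interchangeable vertices. For each set $C = \{u_i,u_j,u_k\}$ the host gets a tree consisting of a star $K_{1,4n+6}$ in which six leaves receive pendant vertices (in groups of sizes $n+i$, $3n-i$, $n+j$, $3n-j$, $n+k$, $3n-k$); the pattern is a disjoint union of stars: $n/3$ copies of $K_{1,4n}$, $|\mathcal{C}|-n/3$ copies of $K_{1,4n+6}$, and element stars $K_{1,n+i}$, $K_{1,3n-i}$ for each element. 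Every component of the host has radius two from its center, so the longest path has five vertices and no $P_6$ arises; correctness is then pinned down by degree thresholds (only central vertices have degree at least $4n$) and an exact vertex count in the components where a $K_{1,4n}$ is embedded. The selection mechanism---embedding either the large star $K_{1,4n+6}$ (wasting the pendants) or the smaller star $K_{1,4n}$ (forcing the element stars to fill the rest)---is precisely the idea your proposal lacks.
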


\begin{theorem}
\label{thm:disconnected}
Let $\mathcal{C}$ be the graph class that forbids a fixed (not necessarily connected) graph $H$ as
either a subgraph, a topological minor, or a minor.
Then, \textsc{Subgraph Isomorphism} on $\mathcal{C}$ is 
\begin{itemize}
  \item fixed-parameter tractable parameterized by the order of $H$
  if $H$ is a linear forest 
  such that at most one component is of order 4
  and all other components are of order at most 3;

  \item randomized XP-time solvable parameterized by the order of $H$
  if $H$ is a linear forest 
  such that each component is of order at most 4;

  \item NP-complete if either
  $H$ is not a linear forest,
  $H$ contains a component with six or more vertices, or
  $H$ contains three components with five vertices.
\end{itemize}
All other cases are randomized polynomial-time reducible
to the case where $H$ is $P_{5}$ or $2P_{5}$.
\end{theorem}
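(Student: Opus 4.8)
The plan is to first collapse the three forbidden relations to a single one and then split along the stated boundary. If $H$ is not a linear forest, all three cases are NP-complete by Lemma~\ref{lem:linear-forest}, so I may assume $H$ is a linear forest. For a linear forest the three relations coincide: a path is a minor (or topological minor) exactly when it is a subgraph, and for a disjoint union of paths the branch sets (or the subdivided paths) may be chosen vertex-disjoint component by component, so $H$-minor-free $=$ $H$-topological-minor-free $=$ $H$-subgraph-free. Hence I work with the subgraph relation throughout. The single structural fact driving everything is that if $H$ has order $k$ and $t$ components, then any path on $k+t-1$ vertices already contains $H$; thus every $H$-free graph is $P_{k+t-1}$-free, has longest path at most $k+t-2\le 2k$, and therefore has treedepth $O(k)$. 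Both $G$ and $Q$ will be such graphs.

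For the tractable cases I use, as a black box, the byproduct that \textsc{Subgraph Isomorphism} is fixed-parameter tractable parameterized by vertex integrity (and I note $\vi{Q}\le\vi{G}$ by subgraph-monotonicity, so it suffices to bound $\vi{G}$; if $\vi{Q}>\vi{G}$ the instance is trivially negative). For the FPT regime ($H$ has at most one component of order $4$, all others of order at most $3$), I distinguish two shapes of the host. If $G$ is $P_4$-free it is a disjoint union of stars and triangles, and then $Q\preceq G$ reduces to a bipartite assignment of the components of $Q$ (first rejecting if $Q$ contains a $P_4$), solvable in polynomial time. Otherwise $G$ contains a $P_4$ on a vertex set $W$; since $H=P_4\cup R$ with every component of $R$ of order at most $3$, $H$-freeness forces $G-W$ to be $R$-free, and $R\subseteq c'P_3$ for $c'=O(k)$ shows that a maximal packing of $P_3$'s in $G-W$ has $O(k)$ members, whose deletion leaves a matching; hence $\vi{G}\le 4+O(k)=O(k)$ and the vertex-integrity algorithm applies. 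The all-order-$\le 3$ subcase is the same argument with $W=\emptyset$.

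For the randomized XP regime ($H$ a linear forest with every component of order at most $4$, now possibly several $P_4$'s), the key point is that $t$ disjoint $P_4$'s already contain $H$, so a maximal $P_4$-packing of $G$ has fewer than $t$ members; deleting its vertex set $S$ (with $|S|<4t=O(k)$) leaves a $P_4$-free graph, i.e. stars and triangles. Because every $P_4$ of $G$ meets $S$, any embedding $\eta$ must make $\eta^{-1}(S)$ a $P_4$-hitting set of $Q$, so $Q-\eta^{-1}(S)$ is again stars and triangles. I therefore guess the interface $U=\eta^{-1}(S)$, of size $O(k)$, and its injection into $S$ ($n^{O(k)}$ guesses in total), and then complete the embedding of the stars-and-triangles part $Q-U$ subject to the boundary adjacencies forced by the guess; this residual constrained packing of disjoint stars and triangles is solved by color-coding, and repetition boosts the success probability, giving a randomized $n^{O(k)}$ algorithm. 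The reason the regime is only XP and not FPT is exactly that, with two or more $P_4$'s, the residual is stars-and-triangles (unbounded vertex integrity) rather than bounded vertex integrity.

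Finally I turn to hardness and to the reduction to the open cases. Since $P_6\subseteq H$ (resp. $3P_5\subseteq H$) makes $P_6$-freeness (resp. $3P_5$-freeness) imply $H$-freeness, the classes of $P_6$-free and of $3P_5$-free graphs are contained in $\mathcal{C}$ whenever $H$ has a component of order $\ge 6$, respectively three components of order $5$; it therefore suffices to prove \textsc{Subgraph Isomorphism} NP-complete on $P_6$-free graphs and on $3P_5$-free graphs. Both are reductions from \textsc{3-Partition} (Proposition~\ref{prop:linfor-cluster}'s source), where a bin couples three item-stars through a shared pool of $B$ vertices; the delicate point is realizing this three-way numerical constraint while keeping every path on at most five vertices in the $P_6$-free case, and while keeping the whole host free of three pairwise-disjoint $P_5$'s in the $3P_5$-free case (forcing the coupling through a bounded shared core rather than independent components). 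For the remaining cases $H=(1\text{ or }2)P_5\cup R$ with every component of $R$ of order at most $4$, I give a randomized reduction to $P_5$, respectively $2P_5$: the maximal-packing argument above isolates a bounded-vertex-integrity / stars-and-triangles part of the host that is solvable in polynomial time, and a color-coding step embeds it on vertices disjoint from the one or two residual $P_5$-threads, which are handed to an oracle for $P_5$-free, respectively $2P_5$-free, \textsc{Subgraph Isomorphism}. The main obstacle throughout is precisely the gadget engineering in the two hardness reductions—enforcing a genuine three-way constraint under a path-length-five budget is exactly what separates the hard $P_6$ from the open $P_5$—together with verifying that the randomized decoupling in the last reduction solves the bounded part independently of the $P_5$-threads.
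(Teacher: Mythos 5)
Your skeleton matches the paper's: reduce to the subgraph relation on linear forests via Lemma~\ref{lem:linear-forest}, handle the FPT regime through vertex integrity, handle the all-components-of-order-at-most-4 regime by a randomized XP algorithm, prove hardness for $P_6$ and $3P_5$, and reduce everything else to $P_5$ or $2P_5$. Your FPT argument (the host contains a $P_4$ on $W$, so $G-W$ is $O(k)P_3$-free and hence $\vi{G}=O(k)$, then invoke the vertex-integrity algorithm) is essentially the paper's Corollary~\ref{cor:kP3}, and it is sound. But there are genuine gaps elsewhere. The most serious one is in the randomized XP regime: color-coding cannot solve your residual problem. After guessing $U=\eta^{-1}(S)$ and its injection into $S$, you must embed $Q-U$ --- a collection of stars and triangles of total size $\Theta(n)$, each vertex carrying a subset-of-$S$ adjacency constraint --- into the stars and triangles of $G-S$. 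Color-coding is a technique for patterns of \emph{bounded} size: a random coloring is colorful on a fixed embedding with probability exponentially small in the pattern's order, so for a pattern on $\Theta(n)$ vertices no polynomial number of repetitions boosts the success probability. This residual packing is exactly the hard core of the problem. The paper solves it by guessing the color histogram of the vertices left unused, building a bipartite multigraph whose edge weights encode histograms, and deciding the existence of a perfect matching of a prescribed exact weight via the randomized Mulmuley--Vazirani--Vazirani algorithm; that exact-weight matching step (not known to be solvable deterministically in polynomial time) is the true source of both the randomization and the false negatives, and your proposal has no substitute for it.

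Second, the NP-completeness bullet is asserted rather than proven. You name a source problem (\textsc{3-Partition}; the paper actually reduces from \textsc{Exact 3-Cover} for the $P_6$ case and from a restricted 3-SAT variant for the $3P_5$ case) and then explicitly defer the ``gadget engineering,'' which is the entire content of those two results: constructing a host in which every path has at most five vertices (respectively, in which no three pairwise disjoint $P_5$'s exist) while still encoding a genuine combinatorial constraint is where all the work lies, and the paper devotes Theorems~\ref{thm:p6} and \ref{thm:4P5} to it. Finally, your reduction of the remaining cases to $P_5$/$2P_5$ is both more complicated than necessary and unjustified: the paper simply observes that such an $H$ is a minor of $pP_5 \cup kP_4$ with $p\in\{1,2\}$, so every $H$-minor-free graph is either $pP_5$-minor-free or $(k+5p)P_4$-minor-free; in the first case one outputs the instance unchanged, and in the second one solves it outright with the randomized algorithm of Theorem~\ref{thm:P4-free-deletion} (polynomial time for fixed $H$) and outputs a trivial yes- or no-instance. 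No ``decoupling'' of $P_5$-threads from the rest of the host is needed, and the decoupling you sketch again leans on color-coding applied to a pattern of unbounded size, which fails for the same reason as above.
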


We prove Theorem~\ref{thm:connected} in Section~\ref{sec:connected} and
Theorem~\ref{thm:disconnected} in Section~\ref{sec:disconnected}.

\section{Preliminaries and basic observations}
\label{sec:pre}

A graph $Q$ is a \emph{minor} of $G$ if
$Q$ can be obtained from $G$ by removing vertices, removing edges,
and contracting edges, where contracting an edge $\{u,v\}$
means adding a new vertex $w_{u,v}$,
making the neighbors of $u$ and $v$ adjacent to $w_{u,v}$,
and removing $u$ and $v$.
A graph $Q$ is a \emph{topological minor} of $G$ if
$Q$ can be obtained by removing vertices, removing edges,
and contracting edges, where contraction of an edge is allowed
if one of the endpoints of the edge is of degree 2.
A graph $Q$ is a \emph{subgraph} of $G$ if
$Q$ can be obtained by removing vertices and edges.
If we cannot remove edges but can do the other modifications as before,
then we get the induced variants
\emph{induced minor}, \emph{induced topological minor}, and \emph{induced subgraph}.

Recall that a graph is a linear forest if it is the disjoint union of paths.
In other words, a graph is a linear forest if and only if 
it does not contain a cycle nor a vertex of degree at least 3.
Observe that in all graph containment relations mentioned above,
if we do not forbid any linear forest from a graph class, then the class includes all linear forests.
Thus, by Proposition~\ref{prop:linfor-cluster}, we have the following lemma.
\begin{lemma}
\label{lem:linear-forest}
If $H$ is not a linear forest, then \textsc{Subgraph Isomorphism} is NP-complete
for graphs that do not contain $H$ as
a minor, a topological minor, a subgraph,
an induced minor, an induced topological minor, or an induced subgraph.
\end{lemma}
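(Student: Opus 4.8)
The plan is to reduce from the hardness already established in Proposition~\ref{prop:linfor-cluster}, exploiting the observation immediately preceding the lemma that each of the six containment relations, when it forbids no linear forest, leaves all linear forests inside the class. So the entire argument reduces to checking that a graph $H$ that is \emph{not} a linear forest can never be contained in a linear forest under any of the six relations.

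First I would recall the characterization that a graph is a linear forest if and only if it has no cycle and no vertex of degree at least $3$. The key structural claim to establish is that every minor (and hence every topological minor and every induced variant) of a linear forest is again a linear forest. For the subgraph and induced-subgraph relations this is immediate, since deleting vertices and edges from a disjoint union of paths yields a disjoint union of paths. For the contraction-based relations I would verify that contracting an edge $\{u,v\}$ within a path cannot create a cycle nor raise a degree to $3$: since the merged vertex $w$ is adjacent only to $(N(u)\cup N(v))\setminus\{u,v\}$, and on a path each of $u,v$ has at most one neighbor other than the other endpoint, $w$ has degree at most $2$; acyclicity is preserved because contraction never turns a forest into a graph with a cycle. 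Hence any minor, topological minor, induced minor, or induced topological minor of a linear forest is a linear forest, and therefore $H$ is not contained in any linear forest under any of the six relations.

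Consequently, for each of the six relations the class $\mathcal{C}$ of $H$-free graphs contains every linear forest. Proposition~\ref{prop:linfor-cluster} shows that \textsc{Subgraph Isomorphism} is NP-complete already when both the host graph and the pattern graph are linear forests; since those hardness instances lie entirely inside $\mathcal{C}$, the very same reduction witnesses NP-hardness of \textsc{Subgraph Isomorphism} on $\mathcal{C}$. Membership in NP is clear, so the problem is NP-complete on $\mathcal{C}$ for all six relations simultaneously.

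I do not anticipate a genuine obstacle: the statement is essentially a corollary of Proposition~\ref{prop:linfor-cluster} together with the closure of the class of linear forests under all six minor-type operations. The only point requiring a moment of care is confirming that edge contraction cannot push a vertex's degree up to $3$ inside a path, which follows from the degree bound on the merged vertex noted above.
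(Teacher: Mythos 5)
Your proof is correct and follows essentially the same route as the paper: the paper's proof is precisely the observation that any class forbidding a non-linear-forest $H$ under any of the six relations still contains all linear forests, combined with Proposition~\ref{prop:linfor-cluster}. You merely spell out the closure of linear forests under deletions and contractions (the degree-at-most-2 and acyclicity argument), which the paper leaves implicit as an observation.
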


\subsection{Graphs forbidding a short path as a minor}

By the discussion above, we can focus on a graph class forbidding a linear forest as a minor
(or equivalently as a topological minor or a subgraph).
We here characterize graph classes forbidding a short path as a minor.

\begin{lemma}
\label{lem:P3-minor-free_characterization}
A connected $P_{3}$-minor free graph is isomorphic to $K_{1}$ or $K_{2}$.
\end{lemma}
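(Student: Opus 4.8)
The plan is to show both directions of the characterization directly from the definition of the minor relation. The statement claims that a connected graph is $P_3$-minor free if and only if it is isomorphic to $K_1$ or $K_2$, so I must verify that $K_1$ and $K_2$ indeed contain no $P_3$ minor, and conversely that any connected graph other than these two does contain $P_3$ as a minor.

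For the easy direction, I would observe that $P_3$ has three vertices, so any graph on fewer than three vertices trivially cannot have $P_3$ as a minor (a minor can never have more vertices than the host, since each minor operation either removes a vertex or merges two vertices via contraction). Thus neither $K_1$ nor $K_2$ admits a $P_3$ minor.

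For the main direction, I would take an arbitrary connected graph $G$ that is not isomorphic to $K_1$ or $K_2$ and exhibit a $P_3$ minor. Since $G$ is connected and has at least one edge (if $G$ has no edge and is connected it is $K_1$), and since $G \ne K_2$, the graph $G$ has at least three vertices. The key step is: in a connected graph on at least three vertices, there is always a vertex $v$ of degree at least $2$, or more usefully, there are three vertices $x, y, z$ with $x$ adjacent to $y$ and $y$ adjacent to $z$ (a walk of length two), because connectivity forces a path reaching every vertex. Concretely, pick any edge $\{a, b\}$; since $G$ has a third vertex $c$ and $G$ is connected, $c$ is joined to $\{a,b\}$ by a path, so by contracting everything except a suitable $P_3$ and deleting the rest, I obtain $P_3$ as a minor. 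The cleanest formalization is to take a spanning tree $T$ of $G$; since $T$ has at least three vertices it contains a path on three vertices, and a path of length two in any graph is already a subgraph isomorphic to $P_3$, hence a minor.

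The only subtlety, and the step I would be most careful about, is confirming that a connected graph on at least three vertices really does contain $P_3$ as a \emph{subgraph} (which suffices, as every subgraph is a minor). This follows because a spanning tree on $n \ge 3$ vertices has at least two edges sharing a common vertex: a tree with at least two edges cannot be a matching, so some vertex has degree at least $2$, and its two incident tree-edges form the desired $P_3$. I do not expect any genuine obstacle here; the argument is short, and the main care is simply in handling the boundary cases ($K_1$ and $K_2$) correctly so that the biconditional is tight.
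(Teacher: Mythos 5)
Your proof is correct and takes essentially the same approach as the paper: both arguments reduce to showing that every connected graph on at least three vertices contains $P_{3}$ as a subgraph (hence as a minor). The only cosmetic difference is how that subgraph is exhibited---you use a spanning tree and a degree-$2$ vertex, while the paper splits into the complete and non-complete cases---and your explicit check that $K_{1}$ and $K_{2}$ have no $P_{3}$ minor is a harmless addition the paper leaves implicit.
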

\begin{proof}
If a connected graph is not complete, then there is a path between nonadjacent vertices.
This path contains $P_{3}$ as a subgraph. A complete graph with more than two vertices
contains $P_{3}$ as a subgraph.
\qed
\end{proof}

\begin{lemma}
\label{lem:P4-minor-free_characterization}
A connected $P_{4}$-minor free graph is isomorphic to $K_{1}$, $K_{3}$, or $K_{1,s}$ for some $s \ge 1$.
\end{lemma}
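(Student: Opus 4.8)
The plan is to reduce the minor condition to a condition on subgraphs and then to analyze $G$ according to the length of a longest path. First I would observe that the trivial direction of minor containment—any subgraph is also a minor—shows that a $P_{4}$-minor-free graph contains no $P_{4}$ as a subgraph; hence its longest path has at most three vertices, and since $G$ is connected its diameter is at most $2$ (a shortest path on four vertices would itself be a $P_{4}$ subgraph). I would then split into cases by whether a longest path of $G$ has one, two, or three vertices. The first two cases are immediate: a longest path on one vertex gives $G = K_{1}$, and a longest path on two vertices means $G$ has no $P_{3}$, so every vertex has degree at most $1$ and connectivity forces $G = K_{2} = K_{1,1}$. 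The substance is therefore the case of a longest path $a - b - c$ isomorphic to $P_{3}$, which I would further divide according to whether $G$ contains a triangle.

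If $G$ contains a triangle $\{x,y,z\}$ and also has a fourth vertex, then by connectivity some edge leaves $\{x,y,z\}$, say $\{x,w\}$ with $w \notin \{x,y,z\}$; but then $w - x - y - z$ is a $P_{4}$ subgraph, a contradiction. Hence $V(G) = \{x,y,z\}$ and $G = K_{3}$. This case is short once the ``escaping edge'' is identified.

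The hard part is the triangle-free case, where the goal is to exhibit a universal vertex. Since a longest path is $P_{3}$, the maximum degree is at least $2$; let $b$ be a vertex of maximum degree. I would claim $V(G) = \{b\} \cup N(b)$: if some $w$ lay outside this set then, as the diameter is at most $2$, there would be $p \in N(b)$ with $p \sim w$, and choosing any second neighbour $q \in N(b) \setminus \{p\}$ (which exists because $\deg(b) \ge 2$) would produce the $P_{4}$ subgraph $w - p - b - q$, a contradiction. Thus $b$ is adjacent to every other vertex, and triangle-freeness forces $N(b)$ to be an independent set, so $G$ is exactly the star $K_{1,\deg(b)}$. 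Collecting the cases shows that every connected $P_{4}$-minor-free graph is $K_{1}$, $K_{3}$, or $K_{1,s}$ for some $s \ge 1$; for completeness one checks the converse easily, since $K_{3}$ has too few vertices to contain a four-vertex minor and every minor of the tree $K_{1,s}$ is a star forest.
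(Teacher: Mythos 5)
Your proof is correct. It covers the same ground as the paper's but is organized differently: the paper splits on whether the graph is a tree. In the non-tree case it takes a cycle, observes that $P_{4}$-minor freeness forces the cycle to be a triangle, and then uses connectivity to conclude the graph equals that triangle---essentially identical to your triangle case with its ``escaping edge.'' The genuine difference is the star case: the paper argues inside a tree that the absence of a universal vertex yields two vertex-disjoint edges, and the unique tree path joining them contains a $P_{4}$; you instead bound the diameter by $2$ (a shortest path on four vertices would be a $P_{4}$ subgraph), take a vertex $b$ of degree at least $2$, and show any vertex at distance $2$ from $b$ would create a $P_{4}$, so $b$ is universal, with triangle-freeness then forcing the neighborhood of $b$ to be independent. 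Both routes are elementary and of comparable length. The paper's version is slightly more compact because the small cases $K_{1}$ and $K_{2}$ are silently absorbed into the tree/star case, whereas you treat them explicitly via the longest-path length; on the other hand, your argument never needs the tree/cycle dichotomy at all and reasons purely about $P_{4}$ subgraphs, which makes the reduction from minors to subgraphs (valid here because a path occurring as a minor of a graph also occurs as a subgraph) the only structural fact used. Your closing remark verifying the converse (that $K_{1}$, $K_{3}$, and $K_{1,s}$ are indeed $P_{4}$-minor free) is not required by the statement, which is one-directional, but it does no harm.
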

\begin{proof}
Let $H$ be a connected $P_{4}$-minor free graph.
If $H$ is not a tree, it has a cycle $C$ as a subgraph.
As $H$ is $P_{4}$-minor free, this cycle $C$ has length 3.
If $H$ contains a vertex $v$ that is not in $C$ but has a neighbor on $C$,
then the vertices in $C$ together with $v$ induce a subgraph of $H$ that contains $P_{4}$ as a subgraph.
The connectivity of $H$ implies that $H = C = K_{3}$.

Now assume that $H$ is a tree with two or more vertices.
If $H$ has no universal vertex, then there are two edges $e_{1}, e_{2} \in E(H)$ that do not share any endpoint.
The edges $e_{1}$ and $e_{2}$ with the unique path connecting them form a path of at least four vertices.
Thus $H$ has a universal vertex, and hence it is a star.
\qed
\end{proof}

\section{Forbidding a connected graph as a minor}
\label{sec:connected}

Here we first show that \textsc{Subgraph Isomorphism} on $P_{k}$-minor free graphs is linear-time solvable if $k \le 4$.
Note that $P_{k}$-minor free graphs include all $P_{k'}$-minor free graphs if $k' \le k$.

The following result can be easily obtained from Lemma~\ref{lem:P4-minor-free_characterization}.
\begin{lemma}
\label{lem:p4}
\textsc{Subgraph Isomorphism} on $P_{4}$-minor free graphs is linear-time solvable.
\end{lemma}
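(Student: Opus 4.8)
The plan is to exploit the structural characterization of Lemma~\ref{lem:P4-minor-free_characterization}: since both $G$ and $Q$ are $P_{4}$-minor free, each connected component of either graph is isomorphic to $K_{1}$, $K_{3}$, or a star $K_{1,s}$, and this classification can be computed in linear time by a single traversal (checking, for each component, its number of vertices and whether it is a triangle or a star, recording $s$ for each star). First I would record the counts $n_{1}, n_{3}$ of host components equal to $K_{1}, K_{3}$ and the multiset of center-degrees of the host stars, together with the analogous quantities $m_{1}, m_{3}$ and star-demands for $Q$. Since a subgraph isomorphism preserves adjacency, the image of any connected component of $Q$ is connected in $G$ and hence lies inside a single component of $G$; so it suffices to understand how the three types of pattern components can be packed, vertex-disjointly, into the three types of host components.

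The next step is a handful of packing observations. A pattern $K_{3}$ can be embedded only into a host $K_{3}$ (stars and single vertices are triangle-free or too small), and it then occupies the whole triangle. Any pattern component that uses at least one edge---that is, any star $K_{1,s}$ with $s \ge 1$---needs a host vertex of degree at least $s$ playing the role of its center. In a host star all edges are incident with the center, and in a host $K_{3}$ any two edges share a vertex; consequently at most one edge-using pattern component can be placed in a single host star or host triangle. Hence each host star $K_{1,t}$ hosts at most one pattern star, of demand at most $t$, and each host triangle hosts either one pattern triangle or one pattern star of demand at most $2$; every host vertex not used in this way can only serve as the image of an isolated pattern vertex.

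These observations reduce $Q \preceq G$ to three conditions checkable in linear time. (i) $|V(G)| \ge |V(Q)|$. (ii) $n_{3} \ge m_{3}$, since each pattern triangle consumes a private host triangle. (iii) The pattern stars admit a capacity-respecting injective assignment to \emph{slots}, where the slots are the host stars (each a slot of capacity equal to its center-degree) together with the $n_{3} - m_{3}$ host triangles left over after (ii) (each a slot of capacity $2$). By the standard threshold (Hall-type) criterion for matching demands to capacities one-to-one with demand at most capacity, (iii) holds if and only if for every integer $v \ge 1$ the number of pattern stars of demand at least $v$ is at most the number of slots of capacity at least $v$; with values bounded by $|V(G)|$ this is checked in linear time by bucket counting. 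Crucially, the requirement on isolated pattern vertices collapses into (i): the number of host vertices consumed by the pattern triangles and the pattern stars is $3m_{3} + \sum_{j} (1 + s_{j})$, which is independent of the chosen assignment, so the number of leftover host vertices is exactly $|V(G)| - |V(Q)| + m_{1}$, and this is at least $m_{1}$ precisely when (i) holds.

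The main thing to verify carefully is sufficiency, i.e.\ that (i)--(iii) let us actually build an embedding. I would place the $m_{3}$ pattern triangles in distinct host triangles, then process the pattern stars in non-increasing order of demand, assigning each to a remaining slot of largest available capacity (using the capacity-$2$ triangle slots for the small demands) as licensed by (iii), and finally map the isolated pattern vertices arbitrarily onto the remaining host vertices, whose count $|V(G)| - |V(Q)| + m_{1} \ge m_{1}$ suffices by the computation above. The only delicate points are the ``at most one edge-using component per host star or triangle'' claim and confirming that the greedy assignment realizes the threshold condition; both are routine. Since every step is a single pass or a bucket count over graphs whose components are $K_{1}$, $K_{3}$, or stars, the whole procedure runs in linear time.
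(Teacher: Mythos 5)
Your proof is correct and takes essentially the same approach as the paper's: both use Lemma~\ref{lem:P4-minor-free_characterization} to reduce to components that are $K_{1}$, $K_{3}$, or stars, match pattern triangles to host triangles, treat leftover host triangles as capacity-$2$ star slots (the paper literally replaces them by $K_{1,2}$'s), match the stars greedily by size, and absorb isolated pattern vertices via the count $|V(G)| \ge |V(Q)|$. Your explicit Hall-type threshold condition (iii) is just the correctness certificate for the paper's greedy maximum-star-to-maximum-star matching, so the two arguments coincide in substance.
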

\begin{proof}
Let $G$ be the host graph and $Q$ be the pattern graph.
We assume that $|V(Q)| \le |V(G)|$ since otherwise $Q \not\preceq G$.
By Lemma~\ref{lem:P4-minor-free_characterization},
each component in both graphs is either an isolated vertex $K_{1}$, a triangle $K_{3}$, or a star $K_{1,s}$ for some $s \ge 1$.

We first remove isolated vertices.
Let $Q'$ and $G'$ be the graphs obtained from $Q$ and $G$, respectively, by removing all isolated vertices.
Since $|V(Q)| \le |V(G)|$, $Q \preceq G$ if and only if $Q' \preceq G$.
Also, since isolated vertices in $G$ cannot be used to embed any vertex of $Q'$,
$Q' \preceq G$ if and only if $Q' \preceq G'$.
In the following, we assume that $Q$ and $G$ do not have isolated vertices.

We next get rid of the triangles.
A triangle $K_{3}$ in $Q$ has to be matched to a triangle in $G$.
Therefore, if $Q$ contains $t$ triangles,
we can remove $t$ triangles from each of $G$ and $Q$,
and obtain an equivalent instance.
(If $G$ does not contain $t$ triangles, then we can immediately find that $Q \not\preceq G$.)
Since $Q$ contains no triangle anymore,
all triangles $K_{3}$ in $G$ can be replaced with the same number of $K_{1,2}$'s.

Now we have only stars $K_{1,s}$ with $s \ge 1$ in both graphs.
The rest of the problem can be solved by greedily matching a maximum star in $Q$ to a maximum star in $G$.

The preprocessing phase can be done in linear time.
The matching phase can be done in linear time as well since
we just need to bucket sort the component sizes in each graph and compare them.
\qed
\end{proof}


The following theorem implies that \textsc{Subgraph Isomorphism} on $P_{k}$-minor free graphs is NP-complete
for every $k \ge 6$.

\begin{theorem}
\label{thm:p6}
\textsc{Subgraph Isomorphism} is NP-complete when
the host graph is a forest without paths of length 6 and
the pattern is a collection of stars.
\end{theorem}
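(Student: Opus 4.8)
The plan is to reduce from \textsc{3-Partition}, the strongly NP-complete problem already exploited in Proposition~\ref{prop:linfor-cluster}; membership in NP is immediate, so the whole task is the hardness construction. Given an instance $a_{1},\dots,a_{3m}$ with bound $B$ (recall $B/4 < a_{i} < B/2$, so every group in a valid partition has \emph{exactly} three members), I would build a host forest $G$ as $m$ vertex-disjoint \emph{bin gadgets}, one per target group, and a pattern $Q$ that is a disjoint union of stars: one \emph{item star} $K_{1,a_{i}}$ per integer $a_{i}$, together with a controlled number of small \emph{padding stars}. The intended correspondence is that a subgraph isomorphism $Q \preceq G$ packs the $3m$ item stars vertex-disjointly into the bins, and that the only realizable packings are exactly those assigning to each bin three item stars whose indices form one group summing to $B$. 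Since \textsc{3-Partition} is strongly NP-complete I may assume the $a_{i}$ are polynomially bounded, so all gadgets have polynomial size and the host avoids $P_{6}$ as required.

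The first design step fixes the shape of a bin. Because the host must contain no $P_{6}$, each of its trees has radius at most two, i.e.\ is essentially a spider: a root carrying hub-children, each hub carrying pendant leaves. I would arrange that a bin exposes exactly three \emph{center slots}---the only vertices whose degree is large enough to anchor a nontrivial item star---so that at most three item stars can be placed per bin, and I would set $|V(Q)| = |V(G)|$ so that every subgraph isomorphism is a bijection. Under this full-cover regime the degree-$1$ pendants of a bin can only be covered as leaves of a star centred at their hub, which forces each occupied hub to anchor a star of a prescribed size and forces every bin to be used; this is what drives the forward direction (a valid partition yields an embedding by routing the three group members to the three slots) and, conversely, lets me read a partition off any embedding, since the degree bounds allow no star to straddle two bins.

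The main obstacle is exactly the coupling of the three slot sizes to the common capacity $B$. This is genuinely delicate and is where the construction departs from Proposition~\ref{prop:linfor-cluster}: there the additive capacity is free because subpaths tile a path $P_{B}$ additively, whereas three star centres in a shallow tree cannot share a single pool of $B$ leaves---a set of $B$ vertices adjacent to three common centres would create cycles, contradicting acyclicity. I therefore expect the crux to be a \emph{counting} gadget rather than a shared neighbourhood: padding $Q$ with stars and $G$ with forced leaves so that, under the bijection, the residual vertices of a bin can be covered precisely when its three item stars have total size $B$, with the padding calibrated globally so that no bin can ``borrow'' slack from another. Proving that this calibration admits no spurious embedding---in particular that a center slot is never sacrificed to serve as a mere leaf, and that padding stars cannot silently absorb a capacity violation---is the technically hard part; once it is established, the reduction is polynomial and the equivalence with \textsc{3-Partition} follows routinely, yielding NP-completeness and hence (via Lemma~\ref{lem:linear-forest} and the radius bound) NP-completeness of \textsc{Subgraph Isomorphism} on $P_{k}$-minor free graphs for all $k \ge 6$.
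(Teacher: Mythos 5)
There is a genuine gap, and it sits exactly where you flag ``the technically hard part'': the per-bin capacity constraint cannot be enforced by the kind of padding calibration you propose, and in fact your reduction from \textsc{3-Partition} appears unrepairable in this form. In a $P_{6}$-free host each tree has radius $2$, so two item stars centred at distinct hubs of a bin have \emph{disjoint} pools of pendant leaves; as you note, they cannot share $B$ leaves. Now consider what can cover a pendant leaf whose hub is occupied by an item-star centre: a star centre mapped to that leaf needs degree at least its number of leaves, and the leaf's only neighbour (its hub) is already the image of another pattern vertex, so by injectivity the only pattern components that can absorb residual leaves are singletons $K_{1}$ (even $K_{1,1}$ fails). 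Singleton padding, however, is globally fungible: if each bin receives three items of total size $A_{\mathrm{bin}}$, the residual leaf count per bin is (up to the $\pm 1$ root adjustment) $3L - A_{\mathrm{bin}}$, and the total number of singletons you put in $Q$ only constrains $\sum_{\mathrm{bins}} (3L - A_{\mathrm{bin}}) = 3mL - mB$, which holds for \emph{every} assignment of three items per bin since $\sum_i a_i = mB$ is given. So the per-bin condition $A_{\mathrm{bin}} = B$ is never enforced and the reduction accepts no-instances of \textsc{3-Partition}. The only way to avoid fungible padding is to have none, but then full coverage forces each item star to exactly fill its hub, i.e.\ the hub sizes of the host already encode a specific partition, which the reduction cannot know in advance. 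This is a structural obstruction: a sum constraint needs a shared pool, and radius-$2$ trees have none.

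The paper escapes precisely this trap by reducing from \textsc{Exact 3-Cover} instead. There the candidate triples are part of the \emph{input}, so each set $C = \{u_i,u_j,u_k\}$ can be hard-coded as a gadget whose six hubs have the specific sizes $n+i$, $3n-i$, $n+j$, $3n-j$, $n+k$, $3n-k$; element identity is encoded in star sizes rather than in a sum, and the only decision left per gadget is binary (selected or not), enforced by embedding either a $K_{1,4n}$ (which leaves the six element stars' worth of room) or a $K_{1,4n+6}$ (which uses up all hub positions and strands the remaining leaves as unusable isolated vertices). Note also that the paper does not use your full-cover regime $|V(Q)| = |V(G)|$: unselected gadgets deliberately waste vertices, and the counting argument instead matches the total size of the element stars against the exact residual capacity of the selected gadgets. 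If you want to salvage your write-up, the shortest path is to switch the source problem to \textsc{Exact 3-Cover} and let the input's set structure do the work that your counting gadget cannot.
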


\begin{proof}
The problem clearly is in NP\@.
To show hardness, we reduce from \textsc{Exact 3-Cover}~\cite{GareyJ79}:

\begin{myproblem}
  \problemtitle{\textsc{Exact 3-Cover}}
  \probleminput{Collection $\mathcal{C}$ of subsets of a set $U$ such that each $c \in \mathcal{C}$ has size~3.}
  \problemquestion{Is there a subcollection $\mathcal{C}' \subseteq \mathcal{C}$
    such that $\bigcup_{C \in \mathcal{C}'} C = U$ and $|\mathcal{C'}| = |U|/3$?}
\end{myproblem}

Suppose we have an instance $(\mathcal{C}, U)$ of \textsc{Exact 3-Cover} given, where $U = \{u_0, \ldots, u_{n-1}\}$.
From $(\mathcal{C}, U)$, we construct the host graph $G$ and the pattern $Q$.

The host $G$ consists of the disjoint union of $|\mathcal{C}|$  trees as follows (see \figref{fig:p6mfreeG}).
For each set $C \in \mathcal{C}$, we take a tree in $G$ as follows.
Take a star $K_{1, 4n+6}$. For each $u_{i} \in C$, do the following: 
take one of the leaves of the star, and add $n+i$ pendant vertices to it.
Take another leaf of the star, and add $3n-i$ pendant vertices to it.
I.e., if $C = \{u_{i},u_{j},u_{k}\}$, then the corresponding tree has seven
vertices of degree more than 1: one vertex with degree $4n+6$, which is also
adjacent to each of the other six non-leaf vertices; the non-leaf vertices have
degree $n+i+1$, $3n-i+1$, $n+j+1$, $3n-j+1$, $n+k+1$, and $3n-k+1$.
Call the vertex of degree $4n+6$ the \emph{central} vertex of the component of $C$.

\begin{figure}[tb]
  \centering
  \includegraphics[scale=.6]{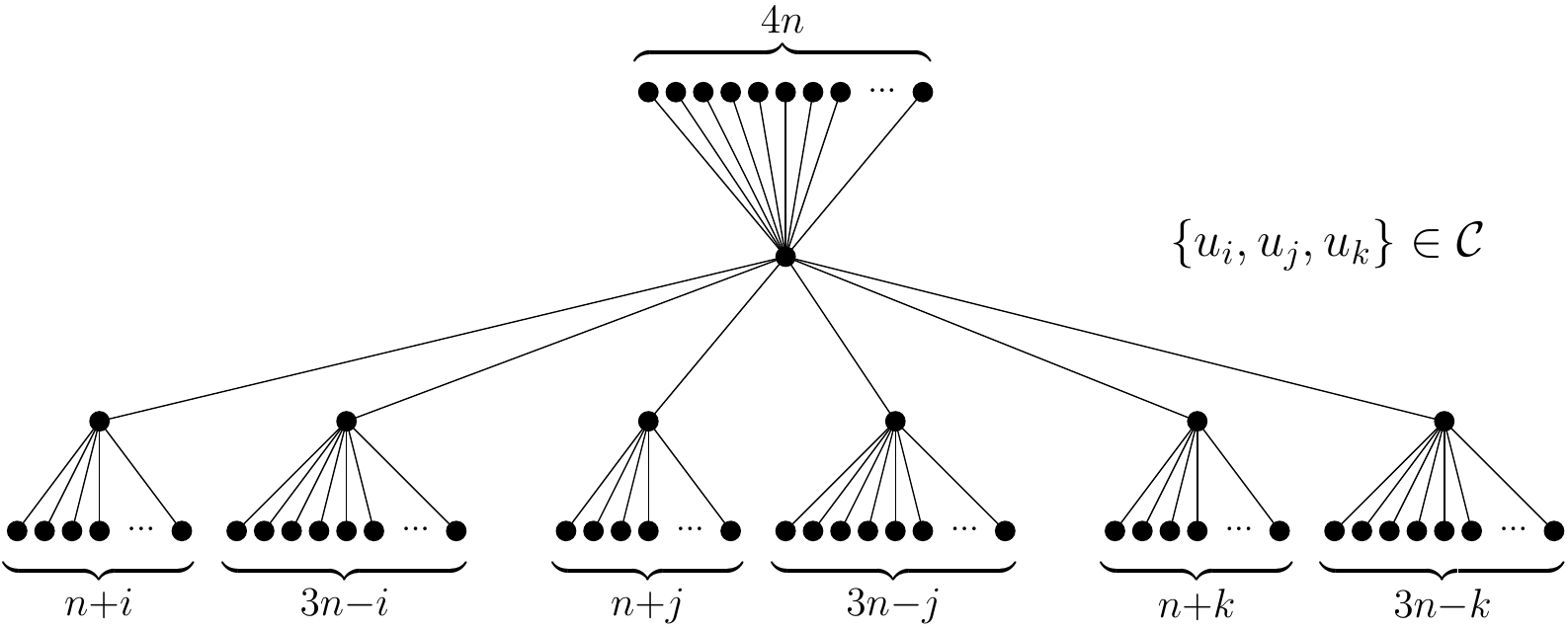}
  \caption{The tree in $G$ corresponding to $\{u_{i},u_{j},u_{k}\} \in \mathcal{C}$.}
  \label{fig:p6mfreeG}
\end{figure}

The pattern graph $Q$ consists of a number of stars (see \figref{fig:p6mfreeQ}):
\begin{itemize}
\item We have $n/3$ stars $K_{1,4n}$.
\item We have $|{\cal C}| - n/3$ stars $K_{1,4n+6}$.
\item For each $i \in \{0,\dots,n-1\}$, we have stars $K_{1,n+i}$ and $K_{1,3n-i}$.
Call these the \emph{element stars}.
\end{itemize}

\begin{figure}[tb]
  \centering
  \includegraphics[scale=.6]{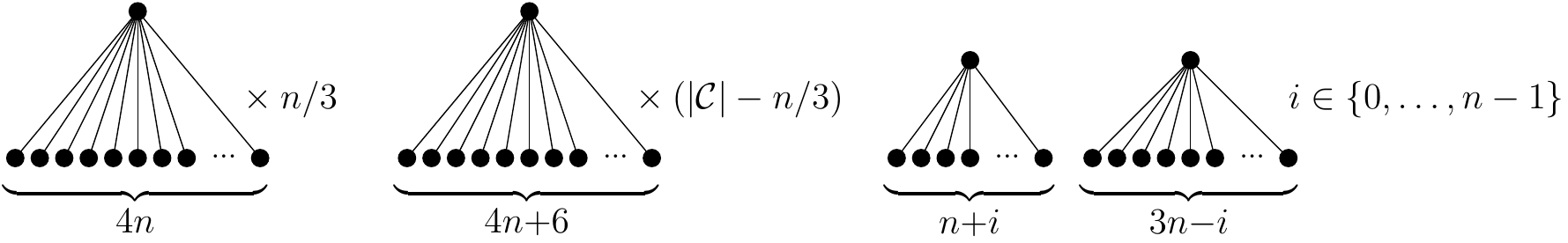}
  \caption{The pattern graph $Q$.}
  \label{fig:p6mfreeQ}
\end{figure}

From $(\mathcal{C}, U)$, $G$ and $Q$ can be constructed in polynomial time.
Now we show that $Q \preceq G$ if and only if $(\mathcal{C}, U)$ is a yes-instance of \textsc{Exact 3-Cover}.
We assume that $n > 6$ in the following.

\paragraph{The if direction:}
Suppose that the \textsc{Exact 3-Cover} instance $(\mathcal{C}, U)$ has a solution $\mathcal{C}' \subseteq \mathcal{C}$. 

We map each $K_{1,4n+6}$ of $Q$ into a component $M$ of $G$ corresponding to a set $D \notin \mathcal{C}'$. 
The center of $K_{1,4n+6}$ is mapped to the central vertex of $M$ and all leaves to its neighbors. 
The other vertices $T$ are isolated and not used.

Embed each $K_{1,4n}$ of $Q$ into a component $L$ of $G$ corresponding to a set $C \in \mathcal{C}'$, 
mapping the center of $K_{1,4n}$ to the central vertex of $L$, and the leaves of $K_{1,4n}$ to leaves neighboring the central vertex of $L$.
After we have done so, we left in this component six stars:
if $C = \{u_{i}, u_{j}, u_{k}\}$, then the vertices in $L$  that we did not yet use form 
stars $K_{1,n+i}$, $K_{1,3n-i}$, $K_{1,n+j}$, $K_{1,3n-j}$, $K_{1,n+k}$, $K_{1,3n-k}$.
We thus can embed the element stars corresponding to $u_{i}$, $u_{j}$, and $u_{k}$
in these stars, and have embedded the entire pattern in the host graph since $\mathcal{C}'$ is a cover of $U$.

\paragraph{The only if direction:}
Suppose that $Q \preceq G$. 
Note that both $Q$ and $G$ have exactly $|\mathcal{C}|$ vertices of degree at least $4n$.
Thus it follows that each vertex of degree at least $4n$ in $Q$ must be mapped to a central vertex of a component in $G$.
We can see that one of the following two cases must hold for the components in the host graph $G$.

\textbf{Case 1: A star $K_{1,4n+6}$ is embedded in the component.}
This ``uses up'' the central vertex and all its neighbors.
The only vertices in the component that are not in the image of the star $K_{1,4n+6}$ are leaves with its neighbor being used: 
these isolated vertices thus cannot be used for embedding any other stars.
So all element stars must be embedded in components for which Case 2 holds.

\textbf{Case 2: A star $K_{1,4n}$ is embedded in the component.}
At this point, note that the total number of vertices of element stars in $Q$ equals $4 n^{2} + 2n$:
each of the $n$ elements has in total $4n$ leaves and two high degree vertices in its element stars.
Also, the total number of vertices not used by the stars $K_{1,4n}$ in the Case 2-components equals $4n^{2}+2n$: 
we have $n/3$ components of Case 2 in $G$ and each has $16n+7$ vertices of which $4n+1$ are used for embedding the star $K_{1,4n}$. 
Thus, each vertex in a Case 2-component $M$ must be used for embedding a vertex.
This is only possible if we embed in $M$ the element stars of the elements in the set corresponding to $M$.

So, let $\cal C'$ be the sets whose component is of Case 2, i.e., where we embedded
a $K_{1,4n}$ in its component. This subcollection $\cal C'$ is a solution for \textsc{Exact 3-Cover}:
for each element $u_i$, its element stars are embedded in a component that corresponds
to a set $C$ that contains $u_i$, and by the argument above $C \in {\cal C}'$.
\qed
\end{proof}

By Lemma~\ref{lem:linear-forest}, if a connected graph $H$ is not a path, then 
\textsc{Subgraph Isomorphism} on $H$-minor free graphs is NP-complete.
Assume that $H$ is a path $P_{k}$. If $k \ge 6$, then by Theorem~\ref{thm:p6} the problem is NP-complete.
If $k \le 4$, then by Lemma~\ref{lem:p4} the problem can be solved in polynomial time.
This completes the proof of Theorem~\ref{thm:connected}.


\section{Forbidding a disconnected graph as a minor}
\label{sec:disconnected}

In this section, we study the more general cases where the forbidden minor $H$ is not necessarily connected.
By Lemma~\ref{lem:linear-forest}, we can focus on linear forests $H$.
We already know, by Theorem~\ref{thm:p6}, if $H$ contains a component with six or more vertices the problem becomes NP-complete.
Thus in the following we consider the case where the components of $H$ have five or less vertices.

Using the results in this section, we can prove Theorem~\ref{thm:disconnected}.
Corollary~\ref{cor:kP3} implies the positive case of Theorem~\ref{thm:disconnected}.
Theorems~\ref{thm:p6} and \ref{thm:4P5}
together with Lemma~\ref{lem:linear-forest} imply the negative cases.
The discussion on the missing cases will be in Section~\ref{ssec:missing-cases}.

\subsection{Subgraph isomorphism on $(P_{4} \cup k P_{3})$-minor free graphs}

We show that \textsc{Subgraph Isomorphism} on $(P_{4} \cup k P_{3})$-minor free graphs is fixed-parameter tractable when parameterized by $k$. 
To this end, we present an algorithm that is parameterized by the vertex integrity, which we think is of independent interest.
The \emph{vertex integrity}~\cite{BarefootES87} of a graph is the minimum integer $k$ such that
there is a vertex set $S \subseteq V$ such that $|S| \le k$ and the maximum order of the components of $G - S$ is at most $k - |S|$.
We call such $S$ a \emph{$\vi{k}$ set} of $G$.
Note that the property of having vertex integrity at most $k$ is closed under the subgraph relation.

This subsection is devoted to the proof of the following theorem.
\begin{theorem}
\label{thm:vertex-integrity}
\textsc{Subgraph Isomorphism} on graphs of vertex integrity at most $k$
is fixed-parameter tractable when parameterized by $k$.
\end{theorem}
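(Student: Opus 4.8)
The plan is to solve \textsc{Subgraph Isomorphism} on graphs of bounded vertex integrity by first decomposing both the host $G$ and the pattern $Q$ along their $\vi{k}$ sets, and then reducing the embedding problem to a bounded instance of integer/bipartite matching that can be solved by an ILP or a flow formulation whose size depends only on $k$. Concretely, let $S_G$ be a $\vi{k}$ set of $G$ and $S_Q$ be a $\vi{k}$ set of $Q$; both have size at most $k$, and every component of $G-S_G$ (resp.\ $Q-S_Q$) has at most $k$ vertices. The crucial structural fact is that any two graphs on at most $k$ vertices belong to a finite set of \emph{types} (there are at most $2^{\binom{k}{2}}$ labeled graphs, and fewer up to isomorphism), so each component of $Q-S_Q$ and of $G-S_G$ can be classified into one of a bounded number $T = T(k)$ of isomorphism types.

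First I would guess, by exhaustive enumeration, the image $\eta(S_Q) \subseteq V(G)$ of the pattern separator. Since $|S_Q| \le k$, this image is a $k$-subset of $V(G)$, but we cannot afford to range over all $\binom{|V(G)|}{k}$ choices directly. The key observation is that it suffices to decide, for each vertex of $S_Q$, whether it maps into $S_G$ or into some component of $G-S_G$, together with the precise placement within that bounded region. Because $|S_G| \le k$ and each component of $G - S_G$ has at most $k$ vertices, the ``local'' part of the embedding — how $S_Q$ and the adjacent component-vertices of $Q$ sit relative to $S_G$ and a constant number of $G$-components — involves only $O(k^2)$ vertices, so there are at most a function of $k$ many relevant local patterns to try. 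I would fix one such guess and verify that it preserves all edges of $Q$ incident to $S_Q$.

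Once the image of $S_Q$ is fixed, the remaining task is to embed the components of $Q-S_Q$ that do not touch $S_Q$ (i.e.\ those whose incident pattern edges all lie inside the component) into the components of $G-S_G$ that are still free. Here each such $Q$-component must be mapped as a subgraph into a single $G$-component (it cannot straddle $S_G$, since we have already committed $\eta(S_Q)$ and these components have no edges to $S_Q$). This is exactly a bipartite many-to-one assignment: a $Q$-component of type $\tau$ can be placed into a $G$-component of type $\sigma$ iff the graph of type $\tau$ is subgraph-isomorphic to the graph of type $\sigma$, a relation on the finite type set that we precompute by brute force in time depending only on $k$. Moreover, several disjoint $Q$-components may share one $G$-component, so the assignment must respect residual capacities; counting how many $Q$-components of each type fit into a $G$-component of each type, after the $\eta(S_Q)$-commitment has removed some vertices, reduces to solving a bounded-dimension integer program (variables indexed by the $T^2$ type-pairs) or, more simply, a network-flow feasibility instance. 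Since there are at most $T$ pattern-side types and $T$ host-side types, the matching instance has dimension bounded by a function of $k$, hence is solvable in FPT time.

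The main obstacle I anticipate is the bookkeeping around components that \emph{do} touch $S_Q$ or $S_G$: after guessing $\eta(S_Q)$, each $G$-component intersecting $\eta(S_Q)$ has a reduced set of available vertices, and each $Q$-component adjacent to $S_Q$ must be embedded consistently with the already-chosen images of its $S_Q$-neighbors. Handling these ``boundary'' components requires that the guess of $\eta(S_Q)$ also records, for every component of $Q$ meeting $S_Q$, into which $G$-component (or into $S_G$) its non-separator vertices go and exactly how — again a bounded-size local decision because only $O(k)$ vertices are involved per relevant region. I would fold this into the enumeration step so that, after guessing, every edge of $Q$ incident to $S_Q$ is already certified, and the residual problem is purely the capacity-constrained assignment of the untouched components. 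The correctness argument then amounts to showing that any global embedding decomposes as (separator image) $+$ (per-component placements), which follows because $S_G$ separates $G$ and $S_Q$ separates $Q$, so no edge of $Q$ can cross between two distinct $G$-components once the separator images are fixed.
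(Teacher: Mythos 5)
Your overall architecture (separator plus type-based assignment solved by a bounded-dimension ILP) superficially resembles the paper's, but the central step of your reduction is false. You fix a $\vi{k}$ set $S_G$ of $G$ in advance and claim that every component of $Q-S_Q$ not touching $S_Q$ ``must be mapped as a subgraph into a single $G$-component'' of $G-S_G$. That is not true: the image of such a component may use vertices of $S_G$ that are left over after placing $\eta(S_Q)$, and may even pass through them into several components of $G-S_G$. Concretely, take $k=4$, $G=P_6$ with vertices $v_1,\dots,v_6$, and $Q=2P_3$. Then $S_G=\{v_3\}$ is a legitimate (even minimal) $\vi{4}$ set of $G$, and $S_Q=\emptyset$ is a legitimate $\vi{4}$ set of $Q$, so every pattern component is ``untouched'' and your algorithm makes no guesses at all: it just tries to assign the two pattern copies of $P_3$ to the host components $\{v_1,v_2\}$ and $\{v_4,v_5,v_6\}$. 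This is infeasible ($P_3\not\preceq P_2$, and both copies cannot share the single host $P_3$), so your algorithm answers no; yet $2P_3\preceq P_6$ via $v_1v_2v_3$ and $v_4v_5v_6$, an embedding that necessarily uses the separator vertex $v_3$ (and every other $\vi{4}$ set of $G$ fails the same way). This is precisely the difficulty the paper's proof is organized around: it does \emph{not} fix a separator of $Q$ and guess its image. Instead it proves (Lemma~\ref{lem:vi(k)-set_subset}) that for any embedding $\eta$ and any $\vi{k}$ set $T$ of $G$ there is some \emph{minimal} $\vi{k}$ set $S$ of $Q$ with $\eta(S)\subseteq T$, enumerates the $O(k^k)$ minimal $\vi{k}$ sets of $Q$, and then---because $R=\eta(S)$ need not itself be a $\vi{k}$ set of $G$---guesses a bounded family of ``unused'' edge sets $F$ so that $R$ becomes a $\vi{k}$ set of $G-F$; only after deleting $F$ do pattern components genuinely map into small host components. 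Both ideas (the adaptive choice of the pattern separator, and the edge-deletion guessing) are absent from your proposal, and without them the clean component-to-component assignment you rely on simply does not exist.

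There is a second, independent gap: you handle the components of $Q-S_Q$ that do touch $S_Q$ by folding their placement into the enumeration of $\eta(S_Q)$. But there can be $\Omega(n)$ such components (take $Q=K_{1,n-1}$ with $S_Q$ its center), so enumerating a placement for each of them is exponential in $n$, not FPT. These boundary components must themselves be absorbed into the ILP/matching, which is what the paper does by letting the component types record adjacency to $S$ and $R$ (its ``match'' and ``fits'' relations), so that all components of $Q-S$, boundary or not, are treated uniformly by the bounded-dimension feasibility program.
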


By combining Theorem~\ref{thm:vertex-integrity}, Lemma~\ref{lem:p4},
and the fact that $k P_{3}$-minor free graphs have vertex integrity at most $3k-1$,
we can prove the following. 
\begin{corollary}
\label{cor:kP3}
\textsc{Subgraph Isomorphism} on $(P_{4} \cup k P_{3})$-minor free graphs is fixed-parameter tractable when parameterized by $k$. 
\end{corollary}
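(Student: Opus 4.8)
The plan is to show that every $(P_4 \cup kP_3)$-minor free graph falls into one of two regimes, each handled by a tool already available, and that we can decide in polynomial time which regime applies. First I would record the structural fact underlying the reduction to vertex integrity: if a graph $G$ is $kP_3$-minor free, then $\vi{G} \le 3k-1$. To see this, take an inclusion-wise maximal family of vertex-disjoint connected subgraphs on exactly three vertices. Each such subgraph is $P_3$ or $K_3$ and hence contains $P_3$ as a minor, so since $G$ is $kP_3$-minor free the family has at most $k-1$ members, and their union $S$ satisfies $|S| \le 3k-3$. By maximality no component of $G-S$ contains a connected three-vertex subgraph, so every component of $G-S$ has order at most $2$. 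This $S$ witnesses $\vi{G} \le |S| + 2 \le 3k-1$.

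Next I would set up the dichotomy for the host graph $G$. (Both inputs lie in the class, but it is the host that matters, since $Q \preceq G$ forces $Q$ to inherit every subgraph-closed property of $G$.) Either $G$ is $P_4$-minor free, or it is not. In the first case I apply Lemma~\ref{lem:p4}: because $Q \preceq G$ makes $Q$ a subgraph of $G$ and $P_4$-minor-freeness is subgraph-closed, we may reject immediately unless $Q$ is also $P_4$-minor free, and then decide $Q \preceq G$ in linear time. In the second case $G$ contains a $P_4$ as a minor; for paths, minor and subgraph coincide (a $P_{\ell}$-minor lifts to a path on at least $\ell$ vertices, and conversely), so $G$ contains a $P_4$ on four vertices, say with vertex set $R$. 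The key point is that $G - R$ must be $kP_3$-minor free: otherwise $k$ vertex-disjoint $P_3$-minors in $G-R$, together with the $P_4$ on $R$, would form a $P_4 \cup kP_3$-minor of $G$, contradicting the hypothesis.

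With $G-R$ being $kP_3$-minor free, the recorded fact gives $\vi{G-R} \le 3k-1$, and re-inserting the $|R|=4$ vertices (adding them to the separator) raises vertex integrity by at most $4$, so $\vi{G} \le (3k-1)+4 = 3k+3$. I would then invoke Theorem~\ref{thm:vertex-integrity} on $G$, rejecting immediately if $\vi{Q} > 3k+3$ (again forced by subgraph-closedness of vertex integrity), to obtain a running time that is fixed-parameter tractable in $\vi{G} \le 3k+3$, hence in $k$. Deciding which branch applies is cheap: by Lemma~\ref{lem:P4-minor-free_characterization} one tests $P_4$-minor-freeness of $G$ by checking that every component is $K_1$, $K_3$, or a star, in linear time.

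The main obstacle is conceptual rather than computational: recognizing that forbidding the disconnected minor $P_4 \cup kP_3$ splits cleanly according to whether the host already avoids the connected piece $P_4$, and that in the complementary branch deleting a single constant-size witness of $P_4$ leaves a $kP_3$-minor free graph of bounded vertex integrity. Once this split is in place, the two established results combine mechanically; the only points needing care are that path-minors and path-subgraphs coincide (so the $P_4$ witness has exactly four vertices) and that both the vertex-integrity bound and the $P_4$-free branch are stable under passing from the host to any subgraph, which legitimizes the immediate rejections for $Q$.
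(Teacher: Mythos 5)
Your proposal is correct and follows essentially the same route as the paper: test $P_4$-minor-freeness, handle the $P_4$-minor-free case via Lemma~\ref{lem:p4}, and otherwise delete a four-vertex $P_4$-witness $R$ so that $G - R$ is $kP_3$-minor free, yielding $\vi{G} \le 3k+3$ and allowing Theorem~\ref{thm:vertex-integrity} to finish. Your maximal-packing argument for the bound $\vi{G}\le 3k-1$ on $kP_3$-minor free graphs is just a restatement of the paper's iterative $P_3$-removal, and the only detail you leave implicit (that checking $\vi{Q} \le 3k+3$ is itself FPT) is supplied in the paper by a citation.
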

\begin{proof}
Let $G$ be the host graph and $Q$ be the pattern graph.
We first check whether the input graphs are $P_{4}$-minor free.
This can be done in polynomial time since we just need to check
the existence of a $P_{4}$ subgraph.
If $G$ is $P_{4}$-minor free but $Q$ is not, then $Q\not\preceq G$.
If both are $P_{4}$-minor free,
then the problem can be solved in polynomial time by Lemma~\ref{lem:p4}.
Hence, in the following, we assume that $G$ has a $P_{4}$ minor, and thus a subgraph $R$ isomorphic to $P_{4}$.
Since $G$ is $(P_{4} \cup k P_{3})$-minor free, $G - V(R)$ is $k P_{3}$-minor free.

Observe that a $k P_{3}$-minor free graph has vertex integrity at most $3k-1$:
by repeatedly removing vertices that form a $P_{3}$ subgraph at most $k-1$ times, we can make the graph $P_{3}$-minor free,
which has the maximum component order at most $2$ by Lemma~\ref{lem:P3-minor-free_characterization}.
Therefore, $G$ itself has vertex integrity at most $3k-1 + |V(R)| = 3k + 3$.

We now check whether $Q$ has vertex integrity at most $3k+3$, which can be done in FPT time parameterized by $k$~\cite{DrangeDH16}.
If this is not the case, then $Q \not\preceq G$.
If $Q$ has vertex integrity at most $3k+3$, then we can apply Theorem~\ref{thm:vertex-integrity}.
\qed
\end{proof}

To prove Theorem~\ref{thm:vertex-integrity},
we start with the following simple fact.
\begin{lemma}
\label{lem:vi(k)-set_subset}
Let $\eta$ be a subgraph isomorphism from $Q$ to $G$.
For every $\vi{k}$ set $T$ of $G$, there exists a minimal $\vi{k}$ set $S$ of $Q$
such that $\eta(S) \subseteq T$.
\end{lemma}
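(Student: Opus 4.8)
The plan is to show that the full preimage $\eta^{-1}(T)$ is already a $\vi{k}$ set of $Q$, and then to shrink it to a minimal one without leaving that preimage.

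Concretely, I would set $S_{0} = \{v \in V(Q) : \eta(v) \in T\}$ and argue that $S_{0}$ is a $\vi{k}$ set of $Q$. The size bound is immediate: since $\eta$ is injective, $|S_{0}| \le |T| \le k$. For the component bound, observe that the restriction of $\eta$ to $V(Q) \setminus S_{0}$ is a subgraph isomorphism from $Q - S_{0}$ to $G - T$, because it is injective, it sends every such vertex into $V(G) \setminus T$ by the definition of $S_{0}$, and it preserves adjacency. The key point is then that a subgraph isomorphism carries connected pieces into connected pieces: any path in $Q - S_{0}$ joining two vertices is mapped to a walk in $G - T$, so each component of $Q - S_{0}$ lands inside a single component of $G - T$. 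By injectivity, the order of a component of $Q - S_{0}$ is at most the order of its image component, which is at most $k - |T|$ since $T$ is a $\vi{k}$ set of $G$. As $|S_{0}| \le |T|$, we have $k - |T| \le k - |S_{0}|$, so every component of $Q - S_{0}$ has order at most $k - |S_{0}|$, and $S_{0}$ is indeed a $\vi{k}$ set.

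Finally I would extract a minimal set inside $S_{0}$. Consider the nonempty finite family $\mathcal{F}$ of all $\vi{k}$ sets of $Q$ that are contained in $S_{0}$ (it contains $S_{0}$ itself), and let $S$ be a member of $\mathcal{F}$ that is minimal with respect to inclusion. Any proper subset of $S$ is still contained in $S_{0}$, so if it were a $\vi{k}$ set it would lie in $\mathcal{F}$ and contradict the minimality of $S$; hence no proper subset of $S$ is a $\vi{k}$ set, i.e., $S$ is a minimal $\vi{k}$ set of $Q$. Since $S \subseteq S_{0}$, we obtain $\eta(S) \subseteq \eta(S_{0}) \subseteq T$, as required.

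The only step that demands care is the component bound: one must verify that adjacency preservation really forces each component of $Q - S_{0}$ to be mapped into one component of $G - T$, so that the injectivity of $\eta$ transfers the order bound from $G - T$ to $Q - S_{0}$. Everything else is bookkeeping around the inequality $|S_{0}| \le |T|$ and the routine observation that minimality within $\mathcal{F}$ coincides with global minimality because proper subsets of members of $\mathcal{F}$ remain inside $S_{0}$.
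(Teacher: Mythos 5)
Your proof is correct and follows essentially the same route as the paper's: both take the preimage $\eta^{-1}(T)$ (the paper writes it as $\eta^{-1}(T')$ with $T' = T \cap \eta(V(Q))$, which is the same set), show it is a $\vi{k}$ set of $Q$ by observing that $\eta$ restricted to the remaining vertices is a subgraph isomorphism between the deleted graphs, and then pass to an inclusion-minimal $\vi{k}$ set inside it. The only cosmetic difference is that the paper compares against the induced subgraph $G[\eta(V(Q))]$ while you compare directly against $G-T$, and you spell out the component-mapping and minimality-extraction steps that the paper leaves implicit.
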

\begin{proof}
Let $G' = G[\eta(V(Q))]$ and $T' = T \cap \eta(V(Q))$.
The set $T'$ is a $\vi{k}$ set of $G'$.
Let $S' = \eta^{-1}(T')$. 
Since $\eta$ restricted to $V(Q) - S'$ is a subgraph isomorphism from $Q-S'$ to $G' - T'$,
the maximum component order of $Q-S'$ cannot be larger than that of $G' - T$,
and thus $S'$ is a $\vi{k}$ set of $Q$.
Now every minimal $\vi{k}$ set $S \subseteq S'$ of $Q$ satisfies that $\eta(S) \subseteq T' \subseteq T$.
\qed
\end{proof}

Our algorithm assumes that there is a subgraph isomorphism $\eta$ from $Q$ to $G$
and proceeds as follows:
\begin{enumerate}
  \item find a $\vi{k}$ set $T$ of $G$;

  \item guess a minimal $\vi{k}$ set $S$ of $Q$ such that $\eta(S) \subseteq T$;

  \item guess the bijection between $S$ and $R := \eta(S)$;

  \item guess a subset $F \subseteq E(G - R)$ of the edges ``unused'' by $\eta$ such that $R$ is a $\vi{k}$ set of $G - F$;
  
  \item solve the problem of deciding the extendability of the guessed parts
  as the feasibility problem of an integer linear program with a bounded number of variables.
\end{enumerate}

\begin{proof}
[Theorem~\ref{thm:vertex-integrity}]
Let $G$ and $Q$ be graphs of vertex integrity at most $k$.
Our task is to find a subgraph isomorphism $\eta$ from $Q$ to $G$ in FPT time parameterized by $k$.

We first find a $\vi{k}$ set $T$ of $G$
and then guess a minimal $\vi{k}$ set $S$ of $Q$ such that $\eta(S) \subseteq T$ for some subgraph isomorphism $\eta$ from $Q$ to $G$.
By Lemma~\ref{lem:vi(k)-set_subset}, such a set $S$ exists if $\eta$ exists.
Finding $T$ can be done in $O(k^{k+1} n)$ time~\cite{DrangeDH16}, where $n = |V(G)|$.
To guess $S$, it suffices to list all minimal $\vi{k}$ set $S$ of $Q$.
The same algorithm in~\cite{DrangeDH16} can be used again:
it lists all $O(k^{k})$ candidates by branching on $k+1$ vertices that induce a connected subgraph.

We then guess the subset $R$ of $T$ such that $\eta(S) = R$.
We also guess for each $s \in S$, the image $\eta(s) \in R$.
That is, we guess an injection from $S$ to $T$.
The number of such injections is $\binom{|T|}{|S|} \cdot |S|! \le k!$.
If there is an edge $\{u,v\} \in E(Q[S])$ such that $\{\eta(u),\eta(v)\} \notin E(G[R])$,
then we reject this guess. Otherwise, we try to further extend $\eta$.

Observe that $R$ is not necessarily a $\vi{k}$ set of $G$.
In the following, we guess ``unnecessary'' edges in $G-R$.
That is, we guess a subset $F$ of the edges that are not used by $\eta$ as images of any edges in $Q$.
Furthermore, we select $F$ so that $R$ is a $\vi{k}$ set of $G - F$.
Such $F$ exists because $\eta$ embeds $Q-S$ (and no other things) into $G-R$.

\paragraph{Guessing $F$:}

We now show that the number of candidates of $F$ that we need to consider is bounded by some function in $k$.
We partition $F$ into three sets $F_{1} = F \cap E(G[T-R])$, $F_{2} = F \cap E(V(G)-T, T-R)$,
and $F_{3} = F \cap E(G-T)$ and then count the numbers of candidates separately.

\textbf{Guessing $F_{1}$:}
For $F_{1}$, we just use all $2^{|E(G[T-R])|} < 2^{k^{2}}$ subsets of $E(G[T-R])$ as candidates.
If $R$ is not a $\vi{k}$ set of $G[T] - F_{1}$, we reject this $F_{1}$.

\textbf{Guessing $F_{2}$:}
Since we are finding $F$ such that $R$ is a $\vi{k}$ set of $G - F$,
each vertex in $T-R$ has less than $k$ edges to $V(G) - T$ in $G - F$.
Thus fewer than $k^{2}$ components of $V(G)-T$ have edges to $T-R$ in $G - F$.
We guess such components $\mathcal{C}$.

Observe that each component in $V(G) - T$ is of order at most $k$
and that each vertex of $V(G) - T$ can be partitioned into at most $2^{k}$ types with respect to the adjacency to $T$.
This implies that the components of $V(G) - T$ can be classified into at most $4^{k^{2}}$ types 
($2^{k^{2}}$ for the isomorphism type and $(2^{k})^{k}$ for the adjacency to $T$)
in such a way that if two components $C_{1}$ and $C_{2}$ of $G-T$ are of the same type,
then there is an automorphism of $G$ that fixes $T$ and maps $C_{1}$ to $C_{2}$.
Given this classification of the components in $V(G) - T$,
we only need to guess how many components of each type are included in $\mathcal{C}$.
For this guess, we have at most 
$\binom{4^{k^{2}} + k^{2} - 1}{k^{2}} 
< 4^{k^{4} + k^{2}}$
options.

For each guess $\mathcal{C}$,
we guess the edges connecting the components in $\mathcal{C}$ to $T-R$ in $G-F$.
Since $|\mathcal{C}| < k^{2}$ and $|C| \le k$ for each $C \in \mathcal{C}$,
there are at most $k^{3} \cdot |T-R| \le k^{4}$ candidate edges.
We just try all $O(2^{k^{4}})$ subsets $F_{2}'$ of such edges,
and set $F_{2} = E(V(G)-T, T-R) - F_{2}'$.
In total, we have $O(2^{k^{4} + k^{2}} \cdot 2^{k^{4}})$ options for $F_{2}$.

\textbf{Guessing $F_{3}$:}
Recall that $G - T$ does not contain any component of order more than $k$.
Hence, if $G - R - (F_{1} \cup F_{2})$ has a component of order more than $k$, then it consists of 
some vertices in $T-R$ and some components in $\mathcal{C}$.
Thus, we only need to pick some edges of the components in $\mathcal{C}$ for $F_{3}$
to make $R$ a $\vi{k}$ set of $G-F$.
We use all $2^{k^{4}}$ subsets of the edges of the components in $\mathcal{C}$
as a candidate of $F_{3}$.

In total, $F = F_{1} \cup F_{2} \cup F_{3}$ has at most $2^{k^{2}} \cdot 4^{k^{4} + k^{2}} \cdot 2^{k^{4}} \cdot 2^{k^{4}}$ candidates,
and each candidate can be found in FPT time.
We reject this guess $F$ if $R$ is not a $\vi{k}$ set of $G-F$.
In the following, we assume that $F$ is guessed correctly and denote $G - F$ by $G'$.

\paragraph{Extending $\eta$:}
Recall that we already know how $\eta$ maps $S$ to $R$
and that each component in $Q - S$ and $G'-R$ is of order at most $k$.
We now extend $\eta$ by determining how $\eta$ maps $Q - S$ to $G'-R$.
By renaming vertices, we can assume that $S = \{s_{1}, \dots, s_{q}\}$,
$R = \{r_{1}, \dots, r_{q}\}$, and $\eta(s_{i}) = r_{i}$ for $1 \le i \le q$.

We say that a vertex $u$ in $Q-S$ \emph{matches} a vertex $v$ in $G' - R$
if $\{i \mid s_{i} \in N_{Q}(u) \cap S\} \subseteq \{i \mid r_{i} \in N_{G'}(v) \cap R\}$.
A set of components $\{C_{1}, \dots, C_{h}\}$ of $Q - S$ \emph{fits} a component $D$ of $G' - R$
if there is an isomorphism $\phi$ from the disjoint union of $C_{1}, \dots, C_{h}$ to $D$
such that for all $u \in \bigcup_{i} V(C_{i})$ and $v \in V(D)$, 
$\phi(u) = v$ holds only if $u$ matches $v$.
Note that if $h > k$, then $\{C_{1}, \dots, C_{h}\}$ can fit no component of $G'-R$.

As we did before for guessing $F_{2}$,
we classify the components of $Q-S$ and $G' - R$ into at most $4^{k^{2}}$ types.
Two components $C_{1}$ and $C_{2}$ of $Q-S$ (or of $G'-R$) are of the same type if and only if
there is an isomorphism $\phi$ from $C_{1}$ to $C_{2}$ such that
$\phi(v_{1}) = v_{2}$ implies that $N_{Q}(v_{1}) \cap S = N_{Q}(v_{2}) \cap S$
(or $N_{G'}(v_{1}) \cap R = N_{G'}(v_{2}) \cap R$, respectively).
We denote by $t(C)$ the type of a component $C$
and by $t(\{C_{1}, \dots, C_{h}\})$ the multi-set $\{t(C_{1}), \dots, t(C_{h}) \}$.
Observe that 
$\{C_{1}, \dots, C_{h}\}$ fits $D$
if and only if 
all sets $\{C_{1}', \dots, C_{h}'\}$ with $t(\{C_{1}', \dots, C_{h}'\}) = t(\{C_{1}, \dots, C_{h}\})$
fits $D'$ with $t(D') = t(D)$.

Observe that the guessed part $\eta|_{S}$ can be extended to
a subgraph isomorphism $\eta$ from $Q$ to $G'$ 
if and only if
there is a partition of the components of $Q - S$
such that each part $\{C_{1}, \dots, C_{h}\}$ in the partition can be injectively mapped to a component $D$ of $G'-R$
where $\{C_{1}, \dots, C_{h}\}$ fits $D$.
To check the existence of such a partition,
we only need to find for each pair of a multi-set $\mathcal{T}$ of types of a set of components in $Q-S$
and a type $\tau$ of a component in $G' - R$,
how many sets of components of type $\mathcal{T}$
the map $\eta$ embeds to components of type $\tau$.
We use the following ILP formulation to solve this problem.

Let $n_{\tau}$ and $n'_{\tau}$ be the numbers of type-$\tau$ components in $Q-S$ and $G'-R$, respectively.
These numbers can be computed in FPT time parameterized by $k$.

For each type $\tau$ and for each multi-set $\mathcal{T}$ of types such that $\mathcal{T}$ fits $\tau$,
we use a variable $x_{\mathcal{T}, \tau}$ to represent
the number of type-$\mathcal{T}$ multi-sets of components in $Q - S$
that are mapped to type-$\tau$ components in $G' - R$.
For each type $\tau$ of components in $G' - R$, 
we can embed at most $n_{\tau}$ sets of components in $Q - S$.
This constraint is expressed as follows:
\begin{align}
  n_{\tau} \ge \sum_{\mathcal{T}\colon \mathcal{T} \text{ fits } \tau}  x_{\mathcal{T}, \tau} \quad \text{for each type } \tau. 
  \label{eq:n-tau}
\end{align}
For each type $\sigma$ of components in $Q - S$,
we need to embed all $n_{\sigma}$ components of type $\sigma$
into some components of $G - R'$.
We can express this constraint as follows:
\begin{align}
  n_{\sigma} =
  \sum_{\mathcal{T}, \tau \colon \sigma \in \mathcal{T} \text{ and } \mathcal{T} \text{ fits } \tau}
  \mu_{\mathcal{T}, \sigma} \cdot x_{\mathcal{T}, \tau} \quad \text{for each type } \sigma,
  \label{eq:n-sigma}
\end{align}
where $\mu_{\mathcal{T}, \sigma}$ is the multiplicity of $\sigma$ in $\mathcal{T}$.
This completes the ILP formulation of the problem. 
We do not have any objective function and just ask for the feasibility.
The construction can be done in FPT time parameterized by $k$.

Observe that there are at most $\binom{4^{k^{2}} + k - 1}{k} < 4^{k^{3}+k}$ multi-sets $\mathcal{T}$ of types of components.
Thus the ILP above has at most $4^{k^{2}} \cdot 4^{k^{3}+k}$ variables (the first factor for $\tau$ and the second for $\mathcal{T}$)
and at most $4^{k^{2}} \cdot 4^{k^{3}+k} + 4^{k^{2}} \cdot 4^{k^{2}} \cdot 4^{k^{3}+k}$ constraints 
(the first term for \eqref{eq:n-tau} and the second for \eqref{eq:n-sigma})
of length $O(4^{k^{2}} \cdot 4^{k^{3}+k})$.
The coefficients are upper bounded by $|V(G')|$.
It is known that the feasibility check of such an ILP can be done in FPT time parameterized by
$k$~\cite{Lenstra83,Kannan87,FrankT87}. 
Thus, the problem can be solved in FPT time when parameterized by $k$.
\qed
\end{proof}

\subsection{Subgraph isomorphism on $k P_{4}$-minor free graphs}
We show that \textsc{Subgraph Isomorphism} on $k P_{4}$-minor free graphs is randomized XP-time solvable parameterized by $k$.
Our randomized algorithm is with false negatives. 
That is, it always rejects a \textsc{no}-instance, but may reject a \textsc{yes}-instance with probability at most $1/2$.

For a graph $G = (V,E)$, a set $S \subseteq V$ is a \emph{$P_{4}$-hitting set} of $G$
 if $G-S$ does not contain $P_{4}$ as a minor (or equivalently as a subgraph).
The \emph{$P_{4}$-hitting number} of $G$ is the minimum size of a $P_{4}$-hitting set of $G$.
To show the main result of this section, we prove Theorem~\ref{thm:P4-free-deletion} below, 
which immediately gives the result claimed above on $k P_{4}$-minor free graphs as their $P_{4}$-hitting number is at most $4(k-1)$.

Our algorithm will find a subgraph isomorphism $\eta$ from $Q$ to $G$ as follows:
\begin{enumerate}
  \item Find a $P_{4}$-hitting set $T$ of $G$ such that $|T| \le k$,
  and guess the ``used'' part $R$ of $T$.

  \item Guess $S = \eta^{-1}(R)$ and the mapping from $S$ to $R$.

  \item Color the vertices of $G - T$ and $Q - S$ according to the connections to $R$ and $S$, respectively.

  \item Guess how many vertices of each color $c$ in $G$ will remain unused
  after embedding the non-singleton components (triangles and stars) in $Q-S$ to $G-T$.
  Check whether the singleton components in $Q$ can be embedded to the guessed vertices.

  \item Construct an auxiliary bipartite multi-graph $B = (X,Y \cup Z;F)$ 
  from the components of $G$ and the non-singleton components in $Q$.
  
  \item Find a perfect matching of $B$ with a specific weight.
  Using such a matching, extend the guessed parts of $\eta$ to a subgraph isomorphism from $Q$ to $G$.
\end{enumerate}

\begin{theorem}
\label{thm:P4-free-deletion}
\textsc{Subgraph Isomorphism} on graphs with $P_{4}$-hitting number at most $k$
admits a randomized $n^{O(2^{k})}$-time algorithm with false negatives.
\end{theorem}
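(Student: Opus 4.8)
The plan is to follow the outline above, reducing the problem to a weighted perfect-matching question that can be attacked algebraically. First I would observe that, since having $P_{4}$-hitting number at most $k$ is closed under taking subgraphs and $Q \preceq G$ forces $Q$ to be isomorphic to a subgraph of $G$, we may reject immediately unless $Q$ also has $P_{4}$-hitting number at most $k$; so both graphs admit hitting sets of size at most $k$. I compute a $P_{4}$-hitting set $T$ of $G$ with $|T|\le k$ by the standard bounded-depth branching on $P_{4}$-subgraphs. The key structural device is the analogue of Lemma~\ref{lem:vi(k)-set_subset}: if $\eta$ is a subgraph isomorphism, then $R := T \cap \eta(V(Q))$ is the image of $S := \eta^{-1}(R)$, and $S$ is itself a $P_{4}$-hitting set of $Q$ because $\eta$ maps $Q-S$ isomorphically into the $P_{4}$-minor-free graph $G-T$. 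I therefore guess $R\subseteq T$ (at most $2^{k}$ choices), the set $S$ together with the bijection $S\to R$ (at most $n^{k}\cdot k!$ choices), and reject guesses not preserving adjacencies inside $S$. For a correct guess, $\eta(Q-S)\subseteq V(G)-T$, and by Lemma~\ref{lem:P4-minor-free_characterization} both $Q-S$ and $G-T$ are disjoint unions of isolated vertices, triangles, and stars.

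Next I would color every vertex $v$ of $G-T$ by $\col(v)=N_{G}(v)\cap R$ and every vertex $u$ of $Q-S$ by $\col(u)=N_{Q}(u)\cap S$, identifying the two colour spaces through the fixed bijection $S\to R$; there are at most $2^{|R|}\le 2^{k}$ colours, and edge preservation forces $\eta(u)=v$ only when $\col(u)\subseteq\col(v)$. The crucial combinatorial observation is that every component $D$ of $G-T$ can host at most one \emph{non-singleton} component (a triangle or a star $K_{1,s}$ with $s\ge 1$) of $Q-S$: every edge of a star passes through its centre and a triangle has only three vertices, so two vertex-disjoint non-singleton components cannot both embed into a single such $D$, and the vertices of $D$ left over can only receive the isolated vertices (the \emph{singletons}) of $Q-S$. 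Thus a subgraph isomorphism extending the guessed part is exactly an injective assignment of each non-singleton component of $Q-S$ to a distinct colour-compatible component of $G-T$, followed by a colour-respecting placement of the singletons onto the still-unused vertices.

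To decouple these two phases I would guess the \emph{leftover profile} $\ell$, namely, for each of the at most $2^{k}$ colours, the number of vertices of $G-T$ that stay unused after the non-singleton components are embedded; since each count lies in $\{0,\dots,n\}$ this is $n^{O(2^{k})}$ guesses. Given $\ell$, feasibility of placing the singletons is a bipartite assignment in the colour-containment poset, checkable in polynomial time via Hall's condition (a flow over the $\le 2^{k}$ colours). It remains to decide whether the non-singleton components can be embedded so as to leave exactly the profile $\ell$. I build the bipartite multigraph $B=(X,Y\cup Z;F)$ whose side $X$ consists of the components of $G-T$, whose side $Y$ consists of the non-singleton components of $Q-S$, and whose side $Z$ consists of dummy vertices padding $Y$ up to $|X|$; an edge joins a $G$-component to a $Q$-component for each admissible embedding (and to every dummy, meaning ``host nothing''), carrying the colour-vector of the vertices it leaves unused in that $G$-component. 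A perfect matching of $B$ whose edge-vectors sum to $\ell$ is precisely an embedding of all non-singleton components realizing the profile $\ell$.

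The main obstacle is this last step: we must find a perfect matching attaining a prescribed $2^{k}$-dimensional weight, a multidimensional exact-matching problem with no obvious deterministic polynomial algorithm, which is why the algorithm is randomized. I would solve it algebraically. Writing each colour-vector in base $n+1$ linearizes the weight with no carries, since every colour total over all components is at most $n$; the target becomes a single integer $W=\mathrm{code}(\ell)$ of magnitude $n^{O(2^{k})}$, and each edge $e$ gets a scalar exponent $w_{e}$. Forming the Edmonds matrix of $B$ with entries $\sum_{e} x_{e}\, y^{w_{e}}$, its determinant is a polynomial in the indeterminates $x_{e}$ and $y$ whose coefficient of $y^{W}$ is a nonzero polynomial exactly when a perfect matching of weight $W$ exists. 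Substituting independent random field elements for the $x_{e}$ yields a univariate polynomial in $y$ of degree $n^{O(2^{k})}$, computable in $n^{O(2^{k})}$ time by evaluation and interpolation; inspecting its $y^{W}$-coefficient decides realizability, correctly rejecting when no such matching exists and erring only with probability at most $1/2$ by the Schwartz--Zippel lemma, which is exactly the promised one-sided false-negative error. Running this test for each guessed $\ell$ that passes the singleton check, and over all guesses of $R$, $S$, and the bijection, gives the claimed randomized $n^{O(2^{k})}$-time algorithm. The delicate points I would still verify are that the base-$(n+1)$ encoding is collision-free and that enumerating the admissible embeddings per component pair (the multiedges of $B$, and hence the polynomial entries of the Edmonds matrix) stays within the stated budget.
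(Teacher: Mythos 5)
Your proposal is correct and follows essentially the same route as the paper: find a $P_{4}$-hitting set $T$ of $G$, guess $R$, $S$, and the bijection $S \to R$, color the vertices of $G-T$ and $Q-S$ by their neighborhoods in $R$ and $S$, guess the color histogram of the leftover vertices, check the singleton components by a polynomial-time bipartite matching, and reduce the embedding of the non-singleton components to an exact-weight perfect matching in the same bipartite multigraph $B=(X,Y\cup Z;F)$ with histogram weights encoded positionally so that they add without carries. The only deviation is that where the paper invokes the randomized exact-weight matching algorithm of Mulmuley--Vazirani--Vazirani as a black box, you re-derive that subroutine via the Edmonds matrix, Schwartz--Zippel, and interpolation in $y$; this, together with your explicit (and correct) observation that a component of type $K_{1}$, $K_{3}$, or $K_{1,s}$ can host at most one non-singleton pattern component --- a fact the paper leaves implicit but relies on for the matching formulation to be complete --- makes your write-up a self-contained filling-in of the same argument rather than a different approach.
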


\begin{proof}
Let $G$ and $Q$ be graphs of $P_{4}$-hitting number at most $k$.
We will find a subgraph isomorphism $\eta$ from $Q$ to $G$ in randomized XP time parameterized by $k$.
In the following, we denote by $n$ and $m$
the total numbers of vertices and edges, respectively, in $G$ and $Q$.

\paragraph{Mapping $P_{4}$-hitting sets:}
We first find a $P_{4}$-hitting set $T$ of $G$ such that $|T| \le k$.
This can be done in time $O(4^{k} (n+m))$ by branching on $P_{4}$-subgraphs and checking $P_{4}$-subgraph freeness.
We guess $R = \eta(V(Q)) \cap T$, $S = \eta^{-1}(R)$, and $\eta(s) \in R$ for each $s \in S$.
We reject the guess if $\{u,v\} \in E(Q[S])$ and $\{\eta(u),\eta(v)\} \notin E(G[R])$ for some $u, v \in S$.
Also, if $S$ is not a $P_{4}$-hitting set of $Q$, then we reject this guess as we cannot map $Q-S$ to $G-T$.
We have $O(2^{k})$ options for $R$, $O(n^{k})$ options for $S$, and $O(k!)$ options for the mapping from $S$ to $R$.
By Lemma~\ref{lem:P4-minor-free_characterization}, each component of $G-T$ and $Q-S$ is
either a singleton $K_{1}$, a triangle $K_{3}$, or a star $K_{1,\ell}$ for some $\ell \ge 1$.

\paragraph{Coloring vertices:}
We rename the vertices in $R$ and $S$ to have $R = \{r_{1}, \dots, r_{q}\}$, $S = \{s_{1}, \dots, s_{q}\}$, 
and $\eta(s_{i}) = r_{i}$ for $1 \le i \le q$.
We set the \emph{color} of a vertex $v \in G - T$, denoted $\col(v)$, to be the set $\{i \mid r_{i} \in N_{G}(v)\}$.
Similarly, we set the color $\col(u)$ of a vertex $u \in Q - S$ to be $\{i \mid s_{i} \in N_{Q}(u)\}$.
Observe that $u \in Q - S$ can be embedded to $v \in G - T$ only if $\col(u) \subseteq \col(v)$ (assuming that the guesses so far are correct).

For a set of vertices $X$ of $G-T$ or $Q-S$ and a color $C \subseteq \{1,\dots,q\}$,
we set $h_{X}(C)$ to be the number of vertices $v \in X$ such that $\col(v) = C$.
We call $h_{X}$ the \emph{color histogram} of $X$.

In the later steps, it is convenient to identify color histograms with $2^{q}$-digit $n$-ary numbers.
Ordering the subsets of $\{1,\dots,q\}$ in an arbitrary way,
the $i$th digit can represent the number of vertices having the $i$th subset as their color.
Then for disjoint sets $X$ and $Y$, it holds that $h_{X} + h_{Y} = h_{X \cup Y}$.

\paragraph{Guessing the color histogram of unused vertices:}
We guess the color histogram $h_{A}$ of the set $A$ of vertices
that remain unused after embedding the non-singleton components (triangles and stars) in $Q-S$ to $G-T$.
(Note that we do not guess $A$ directly.)
The number of possible options for $h_{A}$ is $O(n^{2^{k}})$.
At this point, we do not know whether there is an embedding of the non-singleton components consistent with $h_{A}$.
For now, we assume the existence of such an embedding,
and first test whether the singleton components of $Q-S$ can be embedded to the unused vertices of $G-T$ guessed as $h_{A}$.

We need to embed each singleton component $u$ in $Q-S$ to a vertex $v$ such that $\col(u) \subseteq \col(v)$.
So the problem here can be reduced to the problem of finding a matching saturating $U$
in the bipartite graph such that
\begin{itemize}
  \item the vertex set is $U \cup V$;
  \item $U$ is the set of singleton components of $Q-S$;
  \item $V$ is a set of vertices that contains exactly $h_{A}(C)$ vertices of color $C$;
  \item $u \in U$ and $v \in V$ are adjacent if and only if $\col(u) \subseteq \col(v)$.
\end{itemize}
Since $|U \cup V| \le n$, we can check this in polynomial time.

\paragraph{Embedding non-singleton components:}
We finally test the existence of an embedding of the stars and triangles in $Q-S$ to $G-T$ consistent with $h_{A}$.
We reduce this task to the problem of deciding the existence of a perfect matching with a specific weight
in the bipartite multi-graph $B = (X, Y \cup Z; E)$ defined as follows:
\begin{enumerate}
  \item $X$ corresponds to the components of $G - T$.
  \item $Y$ corresponds to the non-singleton components in $Q-S$.
  \item $Z$ is the set of dummy vertices such that $|Z| = |X| - |Y|$. 
  (Observe that $|X| - |Y|$ has to be nonnegative if $S$ is guessed correctly.)
  
  \item \label{itm:nontrival-step}
  For $x \in X$ and $y \in Y$,
  add one edge of weight $h_{D}$ if there is a way to embed the component $C_{y}$ corresponding to $y$ to
  the component $C_{x}$ corresponding to $x$ such that the set of remaining vertices is $D$. 
  (There could be multiple edges with different weights between $x$ and $y$.)
  
  \item For $x \in X$ and $z \in Z$, add an edge of weight $h_{D}$, where
  $D$ is the set of vertices of the component $C_{x}$ corresponding to $x$.
\end{enumerate}
The graph $B$ can be constructed in time $n^{O(2^{k})}$.
To see this, the \ref{itm:nontrival-step}th step is the only nontrivial one.
For that step, we have $n^{2^{k}}$ candidates for $h_{D}$.
Each candidate can be checked in time polynomial in $n + 2^{k}$
since each component involved is $K_{1}$, $K_{3}$, or $K_{1,s}$.

From the construction, there exists an embedding of the non-singleton components in $Q-S$ to $G-T$ consistent with $h_{A}$
if and only if $B$ has a perfect matching of weight exactly $h_{A}$.
Including an edge between $x \in X$ and $y \in Y$ of weight $h_{D}$ into the perfect matching
means mapping $C_{y}$ to $C_{x}$ in such a way that $V(C_{x}) \setminus \eta(V(C_{y}))$ has the color histogram $h_{D}$.
Including an edge between $x \in X$ and $z \in Y$ means that $C_{x}$ is not used to embed any non-singleton component of $Q-S$.

Now it suffices to find a perfect matching of $B$ with weight exactly $h_{A}$.
It is known that, given a multi-graph with the maximum edge weight bounded by $W$ and a target total weight $T$,
there is a randomized algorithm with false negative that 
finds a perfect matching of weight exactly $T$ (if any) in time polynomial in $|V(B)| + |E(B)| +W$~\cite{MulmuleyVV87}
(see also \cite{Marx04,MarxP14,JansenM15}).
Since $W$ is $n^{O(2^{k})}$, the theorem holds.
\qed
\end{proof}

\subsection{Subgraph isomorphism on $3 P_{5}$-minor free graphs}

A \emph{double star} $D_{a,b}$ is the graph obtained from two stars $K_{1,a}$ and $K_{1,b}$
by connecting the centers of the stars with an edge.
\begin{theorem}
\label{thm:4P5}
\textsc{Subgraph Isomorphism} on $3 P_{5}$-minor free graphs is NP-complete. 
\end{theorem}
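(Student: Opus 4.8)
The plan is to reduce from a strongly NP-hard problem; the natural candidate is \textsc{3-Partition}, exactly as in Proposition~\ref{prop:linfor-cluster} (one could instead try \bsat{} for finer control of the structure). The obstruction is already visible in that reduction: a \textsc{3-Partition} instance is sent to the host $m P_{B}$ and the pattern $\bigcup_{i} P_{a_{i}}$, but for $B \ge 5$ the host $m P_{B}$ contains $m$ pairwise vertex-disjoint copies of $P_{5}$ and is very far from being $3 P_{5}$-minor free. The whole task is therefore to re-engineer the gadgets so that host and pattern contain at most \emph{two} pairwise disjoint $P_{5}$'s, while the embedding still expresses the packing constraints.

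The structural tool I would establish first is a convenient sufficient condition for membership in the class: if a graph has a set $S$ of at most two vertices whose deletion leaves a graph with no $P_{5}$ subgraph, then it is $3 P_{5}$-minor free. Indeed, every $P_{5}$ must then meet $S$, so three pairwise disjoint $P_{5}$'s cannot coexist; and because $3 P_{5}$ is a linear forest, minor containment coincides with subgraph containment (a path on at least five vertices contains $P_{5}$). I would then build both $G$ and $Q$ so that deleting two designated \emph{hub} vertices leaves, in each, a disjoint union of stars and double stars $D_{a,b}$ --- precisely the $P_{5}$-free pieces singled out just before the statement. Numbers from the instance are encoded as leaf multiplicities of these double stars, so the ``local'' part of an embedding becomes a dominance condition on leaf counts, and the two hubs carry all long-range coordination: a vertex's adjacency pattern to the two hubs is an $O(1)$-valued colour recording which part of the instance it belongs to. (A slightly larger, $K_{5,n}$-type core also stays $3 P_{5}$-minor free, since there every $P_{5}$ uses two core vertices and the packing is still only $\lfloor 5/2 \rfloor = 2$; this gives extra coordination room should two hubs turn out to be too tight.)

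For correctness I would argue both directions in the style of Theorem~\ref{thm:p6}. The hub degrees are chosen so large that in any subgraph isomorphism the high-degree hub vertices of $Q$ are forced onto the hubs of $G$ (a counting argument as in the only-if direction there); once the hubs are pinned, the induced colours together with the leaf-count capacities leave no slack, so a valid embedding reads off a feasible solution, while conversely a solution is realized by placing the pattern double stars into the host double stars bin by bin. The $3 P_{5}$-minor freeness of the finished gadgets is then a routine check against the deletion criterion above.

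The hard part will be the tension between these two demands. Forbidding $3 P_{5}$ forces \emph{all} coordination among the many item and bin gadgets through a bounded core: any $P_{5}$-rich ``bundling'' structure --- a star-of-stars, or a caterpillar whose spine has four or more vertices --- already contributes a disjoint $P_{5}$, and at most two such structures are allowed in the entire graph. This rules out the per-gadget bundling of Theorem~\ref{thm:p6}, where every set-gadget carried its own high-degree centre and hence its own $P_{5}$. Making a constant-width core nevertheless enforce the full family of capacity constraints, and ruling out ``cheating'' embeddings that spend the two permitted disjoint $P_{5}$'s to bypass the intended item-to-bin correspondence --- i.e.\ locking the hub degrees, the double-star sizes, and the colour classes together so that no such escape exists --- is where the real work lies.
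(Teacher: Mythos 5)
Your structural groundwork is sound and matches the paper exactly: the criterion ``if deleting two designated hub vertices leaves a graph whose components are stars and double stars (hence $P_{5}$-subgraph-free), then the graph is $3P_{5}$-minor free'' is precisely the observation the paper uses, and your gadget vocabulary (leaf multiplicities of double stars encoding numbers, hub adjacency acting as a colour, hub images forced by degree as in Theorem~\ref{thm:p6}) is the right one. But the proposal stops exactly where the proof begins. An NP-hardness proof \emph{is} the construction, and you never give one: you state the desiderata (``locking the hub degrees, the double-star sizes, and the colour classes together so that no such escape exists'') and explicitly defer them as ``where the real work lies.'' Nothing in the proposal specifies how a concrete instance is translated into concrete double stars and hub attachments, nor how the counting argument closes; so there is no reduction to verify, and the claim is not proved.

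There is also a concrete reason why your preferred source problem, \textsc{3-Partition}, resists this gadget design, and why the paper instead reduces from \bsat{}. A \textsc{3-Partition} bin must absorb \emph{three} item gadgets, but once the two hubs are deleted every component of the host is a star or a double star, i.e.\ has at most two centers; a component can therefore host at most two vertex-disjoint nontrivial stars, so ``three items per bin'' cannot be expressed inside a single component, and routing that packing constraint through only two global hubs is exactly the coordination you admit you cannot yet enforce. The binary structure of \bsat{} is what fits the gadget: each variable contributes \emph{two} isomorphic double stars $D_{i}$, $\overline{D}_{i}$ to the host, the pattern contains \emph{one} double star $B_{i}$ that must cover one of them (forced by sizes), and the uncovered one exposes exactly the star-sides wired (via hub-adjacency counts $n+j$ and $3n-j$) to the clauses satisfied by the chosen literal, into which the clause stars $K_{1,4n}$ then embed. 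Your parenthetical remark that one ``could instead try \bsat{} for finer control'' points at the right fix, but since you neither develop it nor notice the obstruction to \textsc{3-Partition}, the missing construction is a genuine gap rather than a routine completion.
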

\begin{proof}
The problem is clearly in NP\@.
We show the NP-hardness by a reduction from 3-\textsc{Sat} with the restriction that
each clause contains two or three literals
and that each variable occurs exactly twice as a positive literal and exactly once as a negative literal.
We call this variant {\bsat}.
It is known that {\bsat} is NP-complete~\cite{FellowsKMP95}.
Let $(U, \mathcal{C})$ be an instance of {\bsat} with
the variables $U = \{u_{0}, \dots, u_{n-1}\}$ and the clauses $\mathcal{C} = \{C_{0}, \dots, C_{m-1}\}$.

We first construct the host graph $G$ (see \figref{fig:3p5mfreeG}).
It consists of $2n$ double stars, two special vertices $c$ and $c'$,
and many pendant vertices attached to $c$ and $c'$.
For each variable $u_{i} \in U$, we take two isomorphic double stars $D_{(n+i)n, (3n-i)n}$ and call them $D_{i}$ and $\overline{D}_{i}$.
The double stars $D_{i}$ and $\overline{D}_{i}$ correspond to the positive literal $u_{i}$
and the negative literal $\overline{u}_{i}$, respectively.
We add for each $u_{i} \in U$ some edges between leaves of the double stars $D_{i}$ and $\overline{D}_{j}$ and the special vertices $c$ and $c'$ as follows.
Let $u_{i} \in C_{j}, C_{k}$ and $\overline{u}_{i} \in C_{\ell}$.
We arbitrarily and bijectively assign the two stars $K_{1,(n+i)n}$ and $K_{1,(3n-i)n}$ in $D_{i}$ to $C_{j}$ and $C_{\ell}$. 
From the star in $D_{i}$ assigned to $C_{h}$ ($h \in \{j,\ell\}$),
we pick $n+h$ leaves and make them adjacent to $c$,
and then pick $3n-h$ leaves from the remaining and make them adjacent to $c$.
Similarly, in $\overline{D}_{i}$, we choose one side of the double star,
make $n+\ell$ of the leaves there adjacent to $c$,
and make $3n-\ell$ of the remaining adjacent to $c'$.
So far, we took $N := 4n^{3} + 2n + 2$ vertices into $G$.
By attaching $N$ and $2N$ pendant vertices to $c$ and  $c'$, respectively,
we complete the construction of $G$.
Observe that if we remove $c$ and $c'$ from $G$, then
it becomes a collection of double stars and isolated vertices, which is $P_{5}$-minor free.
Thus the host graph $G$ cannot have $3 P_{5}$ as a minor.

\begin{figure}[tb]
  \centering
  \includegraphics[scale=.6]{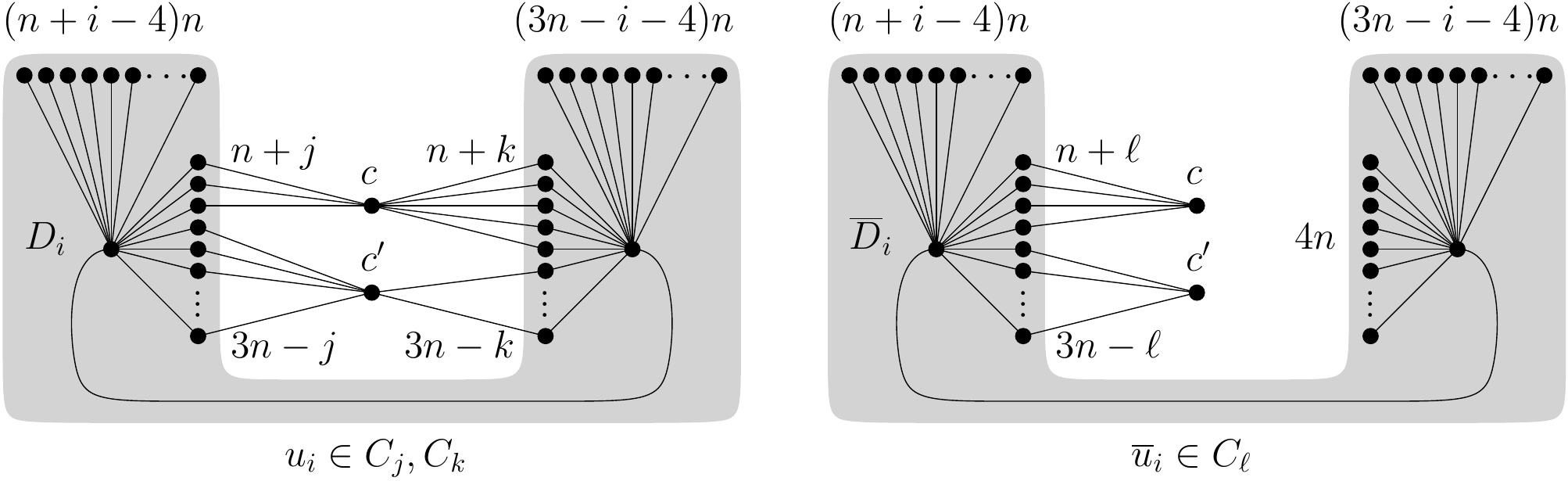}
  \caption{The gadgets in the host graph $G$. Note that the special vertices $c$ and $c'$ are shared by all gadgets
  although their copies are depicted for each gadget for the readability. 
  We omit the pendant vertices attached to $c$ and $c'$.}
  \label{fig:3p5mfreeG}
\end{figure}

The pattern graph $Q$ consists of $n$ double stars,
$m$ stars $K_{1, 4n}$, and two additional vertices $d$ and $d'$
(see \figref{fig:3p5mfreeQ}).
For each variable $u_{i} \in U$, we take a double star $D_{(n+i)n, (3n-i)n}$ and call it $B_{i}$.
For each clause $C_{j} \in \mathcal{C}$, we take a star $K_{1,4n}$ and call it $F_{j}$.
For each star $F_{j}$,
we arbitrarily select $n+j$ leaves and make them adjacent to $d$,
and then make the remaining $3n-j$ leaves adjacent to $d'$.
Finally, we attach $N$ and $2N$ pendant vertices to $d$ and  $d'$, respectively,
Observe that if we remove $d$ and $d'$, then
the the pattern graph becomes a collection of stars, double stars, and isolated vertices.
Thus $H$ is $3 P_{5}$-minor free.

\begin{figure}[tb]
  \centering
  \includegraphics[scale=.6]{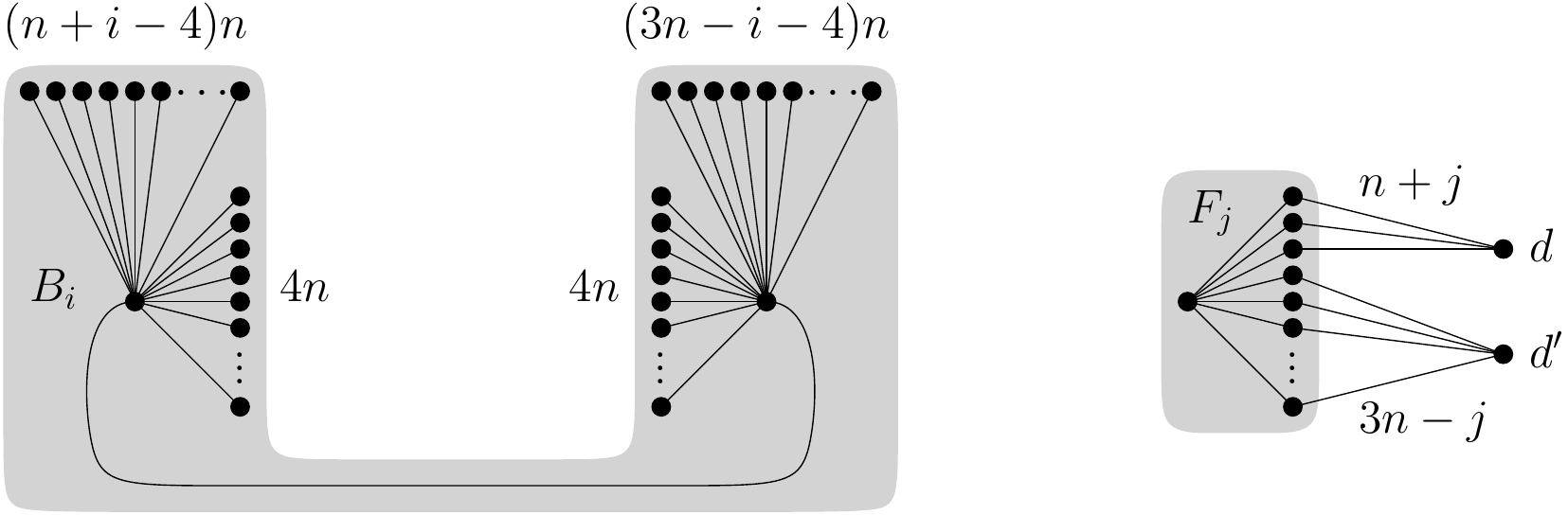}
  \caption{The gadgets in $Q$. We omit the pendant vertices attached to $d$ and $d'$.}
  \label{fig:3p5mfreeQ}
\end{figure}

In the following, we show that $Q \preceq G$
if and only if 
$(U, \mathcal{C})$ is a yes-instance of {\bsat}.

\paragraph{The if direction:}
Suppose that $(U, \mathcal{C})$ has a satisfying assignment $f \colon U \to \{\texttt{true},$ $\texttt{false}\}$.
We construct a subgraph isomorphism $\eta \colon V(Q) \to V(G)$ as follows.
We first set $\eta(d) = c$ and $\eta(d') = c'$.
We map the pendants attached to them appropriately.
For each $u_{i} \in U$, we map $B_{i}$ to $\overline{D}_{i}$ if $f(u_{i}) = \texttt{true}$,
and to $D_{i}$ if $f(u_{i}) = \texttt{false}$.

Now observe that what we left unused in $G$ for each $u_{i}$ is the double star
that corresponds to the clauses satisfied by the literal of $u_{i}$ that is \texttt{true} under $f$.
Furthermore, since $f$ is a satisfying assignment of $(U, \mathcal{C})$,
each clause $C_{j} \in \mathcal{C}$ has at least one star assigned to $C_{j}$
included in an unused double star in $G$. Also, such a correspondence is injective by the construction.
Therefore, we can map each $F_{j}$ in $Q$ into a corresponding star included in an unused double star in $G$
by mapping the center to the center, the neighbors of $d$ to the neighbors of $c$,
and the neighbors of $d'$ to the neighbors of $c'$.

\paragraph{The only if direction:}
Suppose that $Q \preceq G$ and thus there is a subgraph isomorphism $\eta \colon V(Q) \to V(G)$ from $Q$ to $G$.
Observe that $\eta(d) = c$ and $\eta(d') = c'$ because of their high degrees.
Recall that each double star in $Q$ and $G$ has exactly $4n^{2} + 2$ vertices
and that $B_{i}$ is isomorphic only to $D_{i}$ and $\overline{D}_{i}$.
Hence, $B_{i}$ has to be mapped to $D_{i}$ or $\overline{D}_{i}$ for every $i$.

Now we know that after the vertices $\{d,d'\}\cup \bigcup_{i} V(B_{i})$ are mapped,
the unused vertices in $G$ induce a subgraph that contains either $D_{i}$ or $\overline{D}_{i}$ for each $i$.
From this subgraph, we construct a truth assignment $f \colon U \to \{\texttt{true}, \texttt{false}\}$:
if $D_{i}$ is left unused we set $f(u_{i}) = \texttt{true}$;
otherwise we set $f(u_{i}) = \texttt{false}$.

Assume that $D_{i}$ is left unused by $\{d,d'\}\cup \bigcup_{i} V(B_{i})$
and $\eta(V(F_{j})) \subseteq V(D_{i})$ holds for some $i$.
(The other case where $D_{i}$ is replaced with $\overline{D}_{i}$ is the same.)
Since each leaf in $F_{j}$ is adjacent to either $d$ or $d'$, 
each image of them has to be adjacent to either $c$ or $c'$.
Thus $F_{j}$ is mapped to one of the stars in $D_{i}$.
Because of the connections to $d$ and $d'$ in $Q$ and to $c$ and $c'$ in $G$,
this is possible only if the star induced by $\eta(V(F_{j}))$ corresponds to the clause $C_{j}$.
Thus $D_{i}$ contains a star corresponding to $C_{j}$,
and $C_{j}$ includes the positive literal of $u_{i}$.
Since $D_{i}$ is left unused, we have $f(u_{i}) = \texttt{true}$, and thus $C_{j}$ is satisfied by $f$.
\qed
\end{proof}


\subsection{Missing cases}
\label{ssec:missing-cases}

To complete the proof of Theorem~\ref{thm:disconnected}, here we discuss the missing cases of \textsc{Subgraph Isomorphism} on $H$-minor free graphs.
In the missing cases, $H$ contains, as a minor, $P_{5}$ or $2P_{5}$ but no $3P_{5}$ nor $P_{6}$.
In other words, there is $p \in \{1,2\}$ such that $H$ contains a $p P_{5}$-minor but no $(p+1)P_{5}$-minor nor $P_{6}$-minor.

We reduce this case to the case where the forbidden minor is $p P_{5}$ in randomized polynomial time.
Since $H$ is a linear forest, $H$ is a minor of $p P_{5} \cup k P_{4}$ for some constant $k < |V(H)|$.
An $H$-minor free graph is $(p P_{5} \cup k P_{4})$-minor free, and thus it is either $p P_{5}$-minor free or $(k+5p) P_{4}$-minor free.
(The idea here is basically the same with the one in the proof of Corollary~\ref{cor:kP3}.)
If the host graph $G$ is $p P_{5}$-minor free (or $(k+5p) P_{4}$-minor free) but the pattern graph $Q$ is not,
then we output a trivial \textsc{no}-instance (e.g., $G = K_{1}$ and $Q = K_{2}$).
If both $G$ and $Q$ are $p P_{5}$-minor free,
we just output the original input $G$ and $Q$ as the reduced $p P_{5}$-minor free instance.
If both $G$ and $Q$ are $(k+5p) P_{4}$-minor free,
then by Theorem~\ref{thm:P4-free-deletion}, we can solve the problem in randomized polynomial-time.
Depending on whether $Q \preceq G$ or not, we output a trivial \textsc{yes}-instance (e.g., $G = Q = K_{1}$) or a trivial \textsc{no}-instance.


\section{Structural parameterizations of \textsc{Subgraph Isomorphism}}
\label{sec:structural-parameterization}

We conclude the paper with some remarks on structural parameterizations of \textsc{Subgraph Isomorphism}.
Our results imply a few things in this direction. See \figref{fig:width-parameters}.
The proof of Theorem~\ref{thm:p6} implies that \textsc{Subgraph Isomorphism}
is NP-complete even for graphs of tree-depth~\cite{NesetrilDM12} at most 3.
This bound is tight by Lemma~\ref{lem:p4} since graphs of tree-depth at most 2 does not contain $P_{4}$ as a subgraph.
Proposition~\ref{prop:linfor-cluster} implies it is NP-complete even for graphs of constant twin-cover number~\cite{Ganian15}
because cluster graphs have twin-cover number 0.
For the parameterization by neighborhood diversity~\cite{Lampis12},
we can use techniques similar to the ones we used for Theorem~\ref{thm:vertex-integrity}.

\begin{figure}[tb]
  \centering
  \includegraphics[scale=.8]{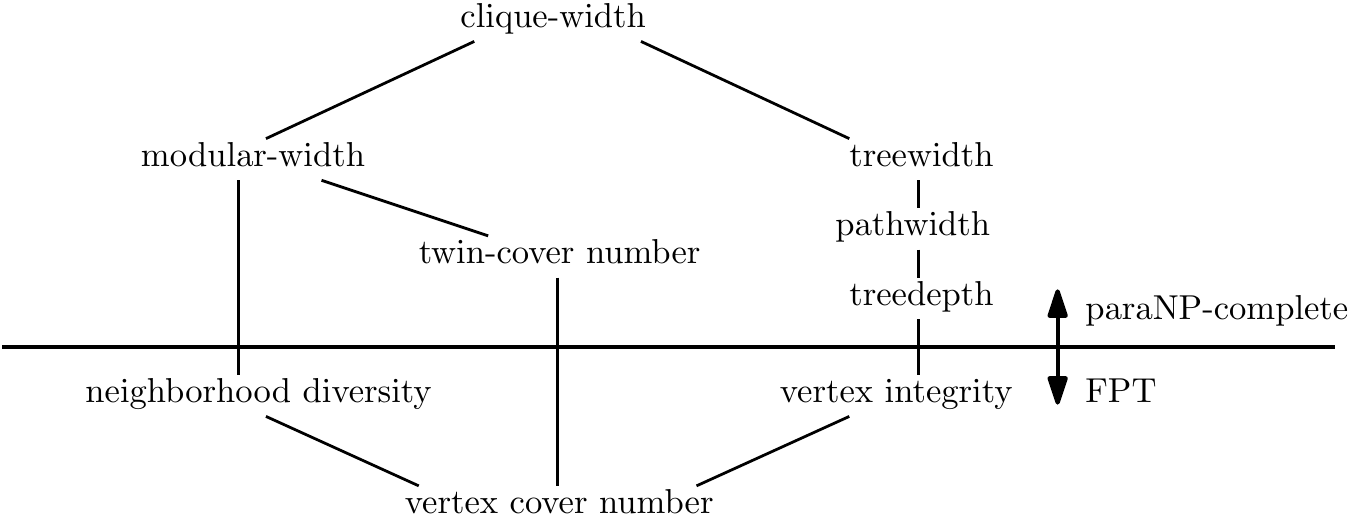}
  \caption{Graph parameters and \textsc{Subgraph Isomorphism}.
  For each connection of parameters, there is a function in the parameter above that lower bounds the one below.}
  \label{fig:width-parameters}
\end{figure}

Two vertices $u$ and $v$ of a graph $G = (V,E)$ are \emph{twins} if $N(u) \setminus \{v\} = N(v) \setminus \{u\}$.
The \emph{neighborhood diversity} of $G = (V,E)$ is the minimum integer $k$ such that
$V$ can be partitioned into $k$ sets $T_{1}, \dots, T_{k}$ of pairwise twin vertices.
Such a minimum partition can be found in linear time using fast modular decomposition algorithms~\cite{McConnellS99,TedderCHP08}.
Observe that each part $T_{i}$ in the partition is either complete or independent.
Also, for parts $T_{i}$ and $T_{j}$, there are either no edges or all possible edges.
We say that $T_{i}$ and $T_{j}$ are \emph{adjacent} if there are all possible edges,
and \emph{nonadjacent} otherwise.

\begin{theorem}
\label{thm:nd}
\textsc{Subgraph Isomorphism} on graphs of neighborhood diversity at most $k$ is fixed-parameter tractable parameterized by $k$.
\end{theorem}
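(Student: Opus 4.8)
The plan is to mirror the proof of Theorem~\ref{thm:vertex-integrity} and reduce the existence of a subgraph isomorphism to the feasibility of an integer linear program whose number of variables is bounded in terms of $k$. First I would compute, in linear time, neighborhood-diversity partitions $T_{1},\dots,T_{q}$ of the host $G$ and $U_{1},\dots,U_{p}$ of the pattern $Q$, where $p,q\le k$. The crucial observation is that vertices inside one class are twins, hence completely interchangeable: if $\eta$ is a subgraph isomorphism and $u,u'$ lie in the same class $U_{a}$, then exchanging their images yields another subgraph isomorphism. Consequently the existence of $\eta$ depends only on the numbers $x_{a,i}$ of vertices of $U_{a}$ that are mapped into $T_{i}$, and, conversely, any nonnegative integer table $(x_{a,i})$ satisfying the constraints below can be realized as a genuine subgraph isomorphism by distributing the vertices of each class arbitrarily (injectivity being guaranteed by the constraints).

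The at most $k^{2}$ variables $x_{a,i}$ must first satisfy the covering and injectivity constraints
\[
\sum_{1\le i\le q} x_{a,i} = |U_{a}| \quad (1\le a\le p), \qquad \sum_{1\le a\le p} x_{a,i} \le |T_{i}| \quad (1\le i\le q),
\]
together with adjacency-preservation constraints. Since in a graph of bounded neighborhood diversity every edge runs between (or inside) twin classes, preserving adjacency reduces to purely pairwise conditions between classes. Call a host pair $(T_{i},T_{j})$ one that \emph{supports adjacency} if either $i\ne j$ and $T_{i},T_{j}$ are adjacent, or $i=j$ and $T_{i}$ is a clique class, and call pattern classes $U_{a},U_{b}$ \emph{linked} if either $a\ne b$ and they are adjacent, or $a=b$ and $U_{a}$ is a clique. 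Then $\eta$ preserves edges if and only if, for every linked pair $U_{a},U_{b}$ and every host pair $(T_{i},T_{j})$ that does not support adjacency, at least one of $x_{a,i},x_{b,j}$ is zero, with the single exception that a clique class $U_{a}$ may place one vertex into an independent host class $T_{i}$, which is recorded by the unit bound $x_{a,i}\le 1$.

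These adjacency conditions are of the form ``$x_{a,i}$ and $x_{b,j}$ are not both positive'', hence not linear. To linearize them I would guess the support $\Sigma=\{(a,i):x_{a,i}\ge 1\}\subseteq\{1,\dots,p\}\times\{1,\dots,q\}$; there are at most $2^{k^{2}}$ choices. For a fixed $\Sigma$ the adjacency conditions become a property of $\Sigma$ alone, checkable directly, and the residual system -- the covering equalities, the injectivity inequalities, the lower bounds $x_{a,i}\ge 1$ for $(a,i)\in\Sigma$, the equalities $x_{a,i}=0$ for $(a,i)\notin\Sigma$, and the unit upper bounds from clique classes -- is linear in at most $k^{2}$ variables. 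Its feasibility can be decided in FPT time by Lenstra's algorithm~\cite{Lenstra83,Kannan87,FrankT87}, exactly as in Theorem~\ref{thm:vertex-integrity}; in fact this residual system is a transportation problem, so it could alternatively be solved in polynomial time by a single integral max-flow computation. Enumerating all supports therefore runs in time $2^{k^{2}}$ times a polynomial in $n$, which is FPT in $k$.

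The main obstacle is the adjacency-preservation step: establishing that the pairwise class-to-class conditions above are \emph{exactly} equivalent to edge preservation, and proving the realizability direction, namely that every integer solution of the support-restricted system can be turned into an actual subgraph isomorphism. The delicate points are the within-class cases: a clique pattern class (the $a=b$ case) and the option of routing two pattern classes into the same host class (the $i=j$ case), where one must track whether enough distinct host vertices of the correct type are available and whether the host class is a clique or independent. The twin-interchangeability observation is precisely what lets this bookkeeping go through, so I expect the bulk of the work to lie in verifying the corner cases of the equivalence rather than in the (routine) ILP machinery.
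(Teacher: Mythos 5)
Your proposal is correct and takes essentially the same route as the paper's proof: partition both graphs into twin classes, introduce the $k^{2}$ counting variables $x_{a,i}$, guess their zero/nonzero pattern, reject guesses that violate the pairwise class-adjacency conditions, and decide the residual system by Lenstra's ILP algorithm. The only cosmetic differences are that the paper uses a ternary guess $x_{i,j}=0$, $x_{i,j}=1$, or $x_{i,j}\ge 2$ (so the clique-class-into-independent-class exception is handled inside the guess rather than by your unit upper bounds), and your observations that the residual system is a transportation problem and that realizability follows from twin interchangeability are correct refinements the paper leaves implicit.
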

\begin{proof}
Let $G$ be the host graph and $Q$ be the pattern graph, both with neighborhood diversity at most $k$.
Let $T_{1}, \dots, T_{t}$ be a partition of the vertices of $G$ into pairwise twin vertices with $t \le k$,
and similarly let $R_{1}, \dots, R_{r}$ be a partition of the vertices of $Q$ into pairwise twin vertices with $r \le k$.

We find a subgraph isomorphism $\eta$ from $Q$ to $G$ by reducing the problem to at most $3^{k^{2}}$ instances of ILP as follows.
By a variable $x_{i,j}$ we represent the number of the vertices that $\eta$ maps from $R_{i}$ to $T_{j}$.
For each variable $x_{i,j}$, we guess whether $x_{i,j} = 0$, $x_{i,j} = 1$, or $x_{i,j} \ge 2$.
Since we have at most $k^{2}$ variables $x_{i,j}$, this gives us at most $3^{k^{2}}$ options.

We reject this guess if at least one of the following holds:
\begin{itemize}
  \item $x_{i,j} \ne 0$ and $x_{i',j'} \ne 0$ for adjacent $R_{i}$ and $R_{i'}$ and nonadjacent $T_{j}$ and $T_{j'}$;
  
  \item $x_{i,j} \ne 0$ and $x_{i',j} \ne 0$ for adjacent $R_{i}$ and $R_{i'}$ and an independent $T_{j}$;

  \item $x_{i,j} \ge 2$ for a complete $R_{i}$ and an independent $T_{j}$.
\end{itemize}

For guesses satisfying all the conditions above, we construct an ILP instance as follows.
For each variable $x_{i,j}$, we add the guessed constraint $x_{i,j} = 0$, $x_{i,j} = 1$, or $x_{i,j} \ge 2$.
For each $i \in \{1, \dots, r\}$, we add the constraint $\sum_{1 \le j \le t} x_{i,j} = |R_{i}|$.
For each $j \in \{1, \dots, t\}$, we add the constraint $\sum_{1 \le i \le r} x_{i,j} \le |T_{j}|$.
We can see that this ILP is feasible if and only if there is a subgraph isomorphism consistent with the guess.
As we saw in the proof of Theorem~\ref{thm:vertex-integrity} for vertex integrity, this feasibility check can be done in FPT time parameterized by $k$.
\qed
\end{proof}


\bibliographystyle{plainurl}
\bibliography{sgimf}

\end{document}